\documentclass[12pt, a4paper]{llncs}
\pagestyle{plain}
\usepackage{amsmath,bm}
\usepackage{cite}
\usepackage{xcolor}
\usepackage{xstring}
\usepackage{caption}
\usepackage[english]{babel}
\usepackage[utf8x]{inputenc}
\usepackage{amsfonts}
\usepackage{amssymb}
\usepackage{theoremref}
\usepackage{subfig}
\usepackage{tikz}
\usepackage{ifthen}
\usepackage[linesnumbered,algoruled,boxed,lined]{algorithm2e}


\def\BibTeX{{\rm B\kern-.05em{\sc i\kern-.025em b}\kern-.08em
    T\kern-.1667em\lower.7ex\hbox{E}\kern-.125emX}}

\newtheorem{observation}{Observation}
\newtheorem{cond}{Condition}
\newtheorem{Remark}{Remark}

\newcommand{\F}{\mathbb{F}}
\newcommand{\Fq}{\mathbb{F}_{q}}
\newcommand{\Fqm}{\mathbb{F}_{q^m}}

\newcommand{\Fqmn}{\mathbb{F}_{q}^{m \times n}}
\newcommand{\Fqmm}{\mathbb{F}_{q}^{m \times m}}
\newcommand{\Fqmmm}{\mathbb{F}_{q}^{m \times m \times m}}
\newcommand{\ba}{\mathbf{a}}
\newcommand{\bb}{\mathbf{b}}
\newcommand{\bc}{\mathbf{c}}
\newcommand{\bs}{\mathbf{s}}
\newcommand{\be}{\mathbf{e}}
\newcommand{\bbf}{\mathbf{f}}
\newcommand{\bg}{\mathbf{g}}

\newcommand{\bx}{\mathbf{x}}
\newcommand{\by}{\mathbf{y}}
\newcommand{\bz}{\mathbf{z}}
\newcommand{\bh}{\mathbf{h}}
\newcommand{\bu}{\mathbf{u}}
\newcommand{\bv}{\mathbf{v}}
\newcommand{\bw}{\mathbf{w}}
\newcommand{\bzero}{\mathbf{0}}
\newcommand{\balpha}{\bm{\alpha}}
\newcommand{\bbeta}{\bm{\beta}}

\newcommand{\calA}{\mathcal{A}}
\newcommand{\calB}{\mathcal{B}}
\newcommand{\calC}{\mathcal{C}}
\newcommand{\calD}{\mathcal{D}}
\newcommand{\calE}{\mathcal{E}}

\newcommand{\calH}{\mathcal{H}}
\newcommand{\calR}{\mathcal{R}}
\newcommand{\calS}{\mathcal{S}}
\newcommand{\calT}{\mathcal{T}}
\newcommand{\calU}{\mathcal{U}}
\newcommand{\calV}{\mathcal{V}}

\newcommand{\Tdot}{\cdot_T}
\newcommand{\Udot}{\cdot_U}
\DeclareMathOperator{\Tr}{Tr}
\DeclareMathOperator{\Colsp}{Colsp}
\DeclareMathOperator{\Rank}{Rank}
\DeclareMathOperator{\R}{R}

\DeclareMathOperator{\im}{Im}

\counterwithout{table}{section}

\newcommand{\Enum}[3]{#1_{#2}, \ldots, #1_{#3}}

\newcommand{\Span}[2]{\langle #2 \rangle_{#1}}


\newcommand{\Wildcard}{*}

\newcommand{\Slice}[3]{
    \IfEqCase{#2}{
        {1}{#1_{#3, *, *}}
        {2}{#1_{*, #3, *}}
        {3}{#1_{*, *, #3}}
    }
}

\newif\ifdraft
\draftfalse 

\title{Generalized Low Rank Parity Check Codes
}

 \author{Ermes Franch\inst{1}, Philippe Gaborit\inst{2}, Chunlei Li\inst{1}}
 \institute{\inst{1} University of Bergen, Norway \\
 \email{\{ermes.franch, chunlei.li\}@uib.no}\\
 \inst{2} Université de Limoges, France \\ 
 \email{gaborit@unilim.fr}}

\begin{document}
\maketitle

\begin{abstract}
In this paper we generalize the notion of low-rank parity check (LRPC) codes by introducing a bilinear product over $\Fq^m$ based on a generic $3$-tensor in $\Fqmmm$, where $\Fq$ is the finite field with $q$ elements.
The generalized LRPC codes are $\Fq$-linear codes in general and 
a particular choice of the $3$-tensor corresponds to the original $\Fqm$-linear LRPC codes. 
For the generalized LRPC codes, we propose two probabilistic polynomial-time decoding algorithms by adapting the decoding method for LRPC codes and also show that the proposed algorithms have a decoding failure rate similar to that of decoding LRPC codes. 
\end{abstract}


\section{Introduction}

Rank metric codes were introduced by Delsarte in 1978 \cite{Delsarte:1978aa}, and were later independently studied by Gabidulin  \cite{Gabidulin1985} and Roth \cite{roth1991maximum}.
Gabidulin \cite{Gabidulin1985} intensively studied a family of maximum rank distance codes, which were later known as Gabidulin codes.
Gabidulin codes can be seen as the rank metric analogue to Reed-Solomon codes. Consequently, several efficient deterministic decoding algorithms for Reed-Solomon codes, such as those in \cite{Gabidulin1985, Loidreau:2006aa, wachter2013fast}, were customized for Gabudilin codes.
With optimal rank distance and efficient decoding, Gabidulin codes found a variety of applications in random networking \cite{Koetter2008}, criss-cross error correction \cite{roth1991maximum}, cryptography \cite{GPT}, and have also stimulated other research topics on the rank metric codes. 
The last four decades witnessed significant developments of rank metric codes and their increasing importance in coding theory \cite{RankBook1, RankBook2}.

\smallskip

In cryptographic applications, rank metric codes allow for smaller key sizes for the same level of security when compared to codes in the Hamming metric, such as the Goppa codes in the McEliece cryptosystem.
Moreover, the decoding of a random $\Fq$-linear rank metric code can be reduced to the MinRank problem which is proven to be NP-complete \cite{Courtois2001}. 
For $\Fqm$-linear rank metric codes there is a probabilistic reduction to an NP-complete problem \cite{GaboritZemor2016}. 
The hardness of the decoding problem and the advantage of smaller key sizes for rank metric codes laid the foundation for rank-based cryptography.
In recent years researchers have proposed various rank-based cryptographic schemes, including RankSign \cite{RankSign2014}, identity-based encryption \cite{IBE-rank2017}, ROLLO\cite{ROLLO}, the signature scheme Durandal \cite{Aragon2019}, etc. 
On the other hand, $\Fqm$-rank metric codes seem to have a strong algebraic structure, which is nontrivial to mask securely.
The GPT cryptosystem \cite{GPT} and its variants based on Gabidulin codes are vulnerable to algebraic attacks by Overbeck \cite{Overbeck2008}. 
For the decoding problem of $\Fqm$-linear rank metric codes, Ourivsky and Johansson \cite{OurivskiJohansson2002} exploited the $\Fqm$-linear structure to reduce the decoding complexity. Very recently, refined attacks using the same model were proposed in a series of papers \cite{Bardet2019} \cite{Bardet2020} \cite{Bardet2020_2}, which challenged the security parameters of several
schemes based on low-rank parity check (LRPC) codes \cite{gaborit2013} despite its weak structure. 

Motivated by recent developments of algebraic attacks \cite{Bardet2019, Bardet2020_2} on the decoding problem for $\Fqm$-linear rank metric codes and relevant cryptographic schemes, 
in this paper we propose an approach to generating $\Fq$-linear rank metric codes that have no significant algebraic structure and allow for efficient decoding. 
To be more specific, we propose an approach to expanding an $\Fq$-linear code of dimension $k$ to a new $\Fq$-linear code of dimension $km$ by introducing a bilinear product $\Tdot$ over $\Fq^m$ based on a generic $3$-tensor $T$ in $\Fqmmm$. When applying the expansion approach on $\Fq$-linear codes with low-rank parity check matrices $H_1, \dots, H_{n-k} \in \Fq^{m\times n}$, we derive a large family of $\F_q$-linear matrix codes, termed \textit{generalized LRPC codes}, which for a particular choice of $T\in \Fqmmm$ correspond to the $\Fqm$-linear LRPC codes. 
Before proceeding with the decoding of the generalized LRPC codes, 
we show that the bilinear product $\Tdot$ satisfies the property that the product space of two subspaces in $\Fq^m$ with smaller dimensions $r, d$ has a dimension upper bounded by $rd$, discuss the relation between generalized LRPC codes from different tensors and study the invertibility of the bilinear product $\Tdot$ with respect to the basis elements in $\calB$ from which the columns in $H_1,\dots, H_{n-k}$ are picked.
Finally, we propose two probabilistic polynomial-time error support recovery algorithms for decoding the generalized LRPC codes by adapting the decoding of $\Fqm$-linear LRPC codes. We also analyze 
the error probability for the decoding steps and show that the decoding algorithms have a similar decoding failure rate to LRPC codes.

The paper is structured as follows.
Section 2 introduces necessary notation and basics of $\Fqm$-linear and $\Fq$-linear rank metric codes, and briefly recalls the  LRPC codes and their decoding procedure. 
In Section 3, we start with some auxiliaries of $3$-tensors in $\Fqmmm$, the multiplication between $3$ tensors and vectors, and 
then introduce a bilinear product over $\Fq^m$ based on a $3$-tensor and 
discuss some properties of the bilinear product.
Sections 4 and 5 are dedicated to generalized LRPC codes and the decoding of this new family of codes: in Section 4 we propose the generalized LRPC codes, discuss its relation to $\Fqm$-linear LRPC codes and also study the relation between generalized LRPC codes from different $3$-tensors; in Section 5 we propose two decoding algorithms 
for generalized LRPC codes with respect to invertibility property of the $3$-tensors and analyze decoding failure rate of the algorithms.

\section{Preliminaries}

In this section we will introduce basic notations and auxiliary results for subsequent sections.

To avoid heavy notation we use $[n]$ to indicate the set $\{1, \ldots, n\}$.
We denote by $\Fq$ the finite field with $q$ elements, where $q$ is a prime power.
The vector space $\Fq^n$ is the set of all $n$-tuples over $\Fq$ while $\Fq^{m\times n}$ is the set of all $m \times n$ matrices over $\Fq.$ 
Vectors will be indicated by lower bold case. Given a vector $\bv$, its $i$-th component will be indicated as $v_i$.
Matrices will be indicated by uppercase letters. Given a matrix $A$ its $i,j$-th entry will be denoted by $a_{i,j}$.
%

\subsection{Rank metric codes}\label{Sec:rank_codes}
In this subsection we recall some basics of rank metric codes in vectors and matrices and relevant properties.

Let $\Fqm$ be the finite field with $q^m$ elements. Let $B = \{\beta_1, \ldots \beta_m\}$ be a basis of $\Fqm$ over $\Fq$. The basis $B$ induces an isomorphism between $\Fqm$ and $\Fq^m$ given as follows: an element $a = \sum_{i=1}^m a_i \beta_i \in \Fqm$ is mapped to  $\phi_B(a) = (a_1,\ldots, a_m)^\intercal \in \Fq^m$. 
Using the isomorphism $\phi_B$ we can identify a row vector $\bv \in \Fqm^n$ with the matrix $V = (\phi_B(v_1), \ldots, \phi_B(v_n)) \in \Fq^{m \times n}$ where each $\phi_B(v_i)$ is a column vector of size $m$. With an abuse of notation we define $\phi_B(\bv) = V$. 

Rank metric codes can be represented in the form of either vectors or matrices. We start with some basics of rank metric codes in vectorial representation.

\begin{definition}
[Support - vector]
	Given a subset $S \subseteq \Fqm$, the $\Fq$-vector subspace generated by the elements of $S$ is called the \textbf{support} of $S$ and denoted as $\Span{\Fq}{S}$. 
	Similarly, the support of a vector $\bv \in \Fqm^n$, denoted by $\Span{\Fq}{\bv}$, is the vector space generated by its coordinates, and the support of a matrix $H \in \Fqm^{n_1 \times n_2}$, denoted by $\Span{\Fq}{H}$, is the vector space generated by all the entries of the matrix $H$.  
\end{definition}

\begin{definition}[Rank metric code - vector]
	A \textbf{vector rank metric code} $\calC$ is a subset of $\Fqm^n$
and its \textbf{minimum (rank) distance} is defined as
	$
	    d_R(\calC) = \min\{d_R(\bu, \bv) \mid \bu \neq \bv \in \calC\},
	$ where the \textbf{rank distance} between $\bu$ and $\bv$ is given by $$d_R(\bu, \bv):=\Rank(\phi_B(\bu)-\phi_B(\bv))=\dim(\Span{\Fq}{\bu - \bv}),$$ in particular, $d_R(\bu,\bzero)=\dim(\Span{\Fq}{\bu})$ is the \textbf{rank weight} of $\bu$.  
\end{definition}
    A rank metric code $\calC$ is said to be $\Fq$-linear (resp. $\Fqm$-linear) if it is an $\Fq$-linear (resp. $\Fqm$-linear) subspace of $\Fqm^n$.
	An $\Fqm$-linear code $\calC$ of dimension $k$ admits a generator matrix $G \in \Fqm^{k \times n}$ such that $\calC = \{\bx G \mid \bx \in \Fqm^k \}.$
Studying the dual of a code can give some information on the code itself.
\begin{definition}
[Duality - vector]
	For a vector code $\calC \subseteq \Fqm^n$ the \textbf{dual} of $\calC$ will be defined as 
	$$
		\calC^{\perp} = \{\bx \in \Fqm^n \mid \bc \cdot \bx^\intercal = \sum_{i=1}^n c_i x_i = 0, \forall \bc \in \calC\}.
	$$
\end{definition}
If $\calC$ is an $\Fqm$-linear code of $\Fqm$-dimension $k$, then its dual $\calC^\perp$ is a linear code of $\Fqm$-dimension $n-k$.
	Let $\calC$ be an $\Fqm$-linear vector code with a generator matrix $G\in\Fqm^{k\times n}.$
    It is readily seen that the generator matrix $H$ of its dual $\calC^\perp$ is a parity-check matrix of $\calC$, i.e., $HG^\perp =\bzero$.
The parity check matrix is an important instrument in error-correcting codes.
Let $\calC$ be an $\Fqm$-linear code with a  parity check matrix $H \in \Fqm^{(n-k) \times n}$. 
If we consider $\by \in \Fqm^n$, we have that $\by \in \calC$ iff $H \by^\intercal = \bzero$.
In general we will have $H \by^\intercal = \bs \in \Fq^{n-k}$, which is
called the syndrome of $\by$.
An important problem in rank-based cryptography is the rank syndrome decoding problem (RSD).

\begin{definition}
[RSD problem - vector] 
	Given a parity-check matrix $H \in \Fqm^{(n-k) \times n}$ of an $\Fqm$-linear vector code $\calC \subseteq \Fqm^{n}$,
	a syndrome $\bs \in \Fqm^{n-k}$ and a small integer $r$, find a vector $\by \in \Fqm^n$ such that $H \by^\intercal = \bs$ and $w_R(\by) \le r.$
\end{definition}

Matrix rank metric codes have a very close connection to rank metric codes in vector form. Below we will introduce some basics of matrix rank metric codes. The corresponding notions for vector rank metric codes can be similarly given under an isomorphism between $\Fq^m$ and $\Fqm$. 

Similarly we start a notion of support and distance over the space of the matrices in $\Fq^{m \times n}$.
\begin{definition}
[Support]
	Given a matrix $U \in \Fq^{m \times n}$ its \textbf{column support}, denoted as $\Colsp(U)$, is the vector space of $\Fq^m$ generated as the span of all of its columns.
	
\end{definition}

\begin{definition}[Rank metric code - matrix]
	A \textbf{matrix} rank metric code $\calC$ is a subset of $\Fq^{m \times n}$.
	The set $\calC \subseteq \Fq^{m \times n}$  is called a \textbf{linear} matrix rank metric code if it is an $\Fq$-linear subspace of $\Fq^{m \times n}$.
	The \textbf{minimum distance} of a matrix code $\calC$ is given by
	$$
	    d_R(\calC) = \min \{d_R(U,V) \mid U\neq V \in \calC \},
	$$ where $d_R(U,V):= \Rank(U - V) = \dim(\Colsp(U - V))$.
\end{definition}
The Singleton-like bound relates the maximum possible size of a rank metric code with its minimum distance.
\begin{definition}[MRD code]
    Let $m \ge n$. 
    All matrix rank metric codes $\calC \subseteq \Fq^{m \times n}$ satisfy the following Singleton-like bound
    \begin{equation}
        |\calC| \le q^{m(n-d+1)}.
    \end{equation}
    A code $\calC$ is called a \textbf{maximum rank distance} (MRD) code if it attains the Singleton-like bound.
\end{definition}

Let $\calC$ be an $\Fq$-linear matrix rank metric code of dimension $k$.
This means there are $G_1,\ldots, G_k \in \Fq^{m \times n}$ linearly independent matrices that generate this code.
We can define something analogous to a generator matrix collecting these matrices into a three-dimensional array $G = (G_1,\ldots, G_k) \in \Fq^{m \times n \times k}.$ 
Such three-dimensional arrays, termed 3-tensors for short, will be introduced and discussed in detail in the next section.
Let $\calC$ be the code generated by $G$, any element $C \in \calC$ can be expressed as $C = \sum_{i=1}^{k} x_i G_i$ for some $\bx = (x_1, \ldots, x_k) \in \Fq^k$.
For matrix codes the notion of duality is defined using the trace inner product.
\begin{definition}
[Duality - matrix]
	For a matrix code $\calC \subseteq \Fq^{m \times n}$ the \textbf{dual} of $\calC$ is defined as 
	$$
		\calC^{\perp} = \{X \in \Fq^{m \times n} \mid \Tr(CX^\intercal) = \sum_{i = 1}^m \sum_{j = 1}^n  c_{i,j} x_{i,j} = 0, \forall C \in \calC\}.
	$$
	If $\calC$ is a linear matrix code of dimension $k$, its dual $\calC^\perp$ is a linear matrix code of dimension $mn-k$.
	For a linear matrix code $\cal$ of dimension $k$, a $3$-tensor generator
 $H=(H_1,H_2, \dots, H_{nm-k}) \in \Fq^{m \times n \times (mn-k)}$ of its $\calC^\perp$ is a $3$-tensor parity-check of $\calC$.
\end{definition}

The $3$-tensor parity-check $H$ of a code $\calC$ can be used in the same way as the parity-check matrix to check whether a matrix $C \in \Fq^{m \times n}$ belongs to the code $\calC$ or not.
By construction $H_1,\ldots, H_{nm-k}$ are a basis of $\calC^\perp$, implying $\Tr(C H_i^\intercal) = 0$, $
\forall i\in [nm-k]$.
For a given $C \in \Fq^{m \times n}$, we have that $C \in \calC$ iff $\Tr(C H_i^\intercal) = 0$ for all $H_i$.
Similarly to the $\Fqm$-linear case, for a given matrix $Y \in \Fq^{m \times n}$ we can define its syndrome as $\bs = (\Tr(Y H_1^\intercal), \ldots , \Tr(Y H_{nm-k}^\intercal))\in\Fq^{nm-k}$.
With this notion of syndrome we can express the (RSD) problem for matrix codes.
\begin{definition}
[RSD problem - matrix] 
	Given a parity-check $H \in \Fq^{m \times n \times (nm-k)}$ of an $\Fq$-linear matrix code $\calC \subseteq \Fq^{m \times n}$,
	a syndrome $\bs \in \Fq^{nm-k}$ and a small integer $r$, find a matrix $Y \in \Fq^{m \times n}$ such that $(\Tr(Y H_1^\intercal), \ldots , \Tr(Y H_{nm-k}^\intercal)) = \bs$ and $\Rank(Y) \le r.$
\end{definition}

We now discuss the connection between the matrix and vectorial representations of rank metric codes.
Recall that $\phi_B$ is an isomorphism induced from $\Fqm$ to $\Fq^m$ from a basis $B$. 
Let $\calC \subseteq \Fqm^n$ be a vector rank metric code. 
The code $\phi_B(\calC) = \{\phi_B(\bc) \mid \bc \in \calC\}$ is a matrix rank metric code whose codewords are elements of $\Fq^{m\times n}$.
Notice that
$\phi_B$ preserves the rank distance, namely, $d_R(\bu, \bv) = d_R(\phi_B(\bu),\phi_B(\bv))$ for any $\bu, \bv$ in $\Fqm$.
Moreover, it is clear that 
$
	\phi_B(\Span{\Fq}{\bv}) = \{\phi_B(x) \mid x \in \Span{\Fq}{\bv}\} = \Colsp(\phi_B(\bv)).
$
Hence $\phi_B$ also induces an isomorphism between $\Fq$-linear subspaces of $\Fqm$ and $\Fq$-linear subspaces of $\Fq^m$.

For what we have seen so far, it seems the same to look at a rank metric code either in its vector form or in its matrix form.
Usually it is convenient to treat $\Fqm$-linear rank metric codes in vectorial form and $\Fq$-linear rank metric codes in matrix form.

An important difference between these two representations is the notion of duality.
Recall that the duality for rank metric codes in vectorial form is defined in terms of the inner product over $\Fqm^n$, and the duality for matrix rank metric codes is defined in terms of the trace inner product over $\Fq^{m \times n}$.
While the inner product of two vectors in $\Fqm^n$ lies in $\Fqm$, the trace inner product of two matrices in $\Fq^{m \times n}$ lies in $\Fq$.
So it is not surprising that $\phi_B(\calC^\perp) \ne \phi_B(\calC)^\perp$ in general.
Gorla and Ravagnani discussed and explicitly showed this relation \cite{gorla2017}.
Although the dual of an $\Fqm$-linear vector code and the dual of its matrix representation are two different codes, they are actually isomorphic.

\begin{theorem}\label{gorla2017}\cite{gorla2017}
    Let $\calC \in \Fqm^n$ be an $\Fqm$-linear vector code. Consider a basis $B = \{\Enum{\beta}{1}{m}\}$ of $\Fqm$ and its orthonormal basis $B' = \{\Enum{\beta'}{1}{m}\}$ with $\Tr_q^{q^m}(\beta_i \beta_j')=\delta_{i,j}$, $\forall\,i,j\in [m]$.
    Then we have
    $$
        \phi_B(\calC)^\perp = \phi_{B'}(\calC^\perp).
    $$
\end{theorem}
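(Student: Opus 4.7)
The plan is to prove the equality by combining a dimension count with a single inclusion established through a direct trace computation. First I would verify that both sides of the claimed identity have the same $\Fq$-dimension. Since $\calC$ is $\Fqm$-linear of $\Fqm$-dimension $k$, its $\Fqm$-linear dual $\calC^\perp$ has $\Fqm$-dimension $n-k$, so $\Fq$-dimension $m(n-k)$. Because $\phi_{B'}$ is an $\Fq$-linear injection applied coordinatewise, $\phi_{B'}(\calC^\perp)$ is an $\Fq$-subspace of $\Fq^{m\times n}$ of the same $\Fq$-dimension $m(n-k)$. On the other hand, $\phi_B(\calC)$ is an $\Fq$-linear matrix code of dimension $mk$, so by the definition of the matrix dual its dual $\phi_B(\calC)^\perp$ has $\Fq$-dimension $mn - mk = m(n-k)$. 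Once dimensions agree, it suffices to prove one inclusion.

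Next I would establish $\phi_{B'}(\calC^\perp) \subseteq \phi_B(\calC)^\perp$ by a term-by-term computation. Fix $\bc \in \calC$ and $\bx \in \calC^\perp$, and expand coordinates as $c_i = \sum_{k=1}^m c_{k,i}\beta_k$ and $x_j = \sum_{l=1}^m x_{l,j}\beta'_l$ with $c_{k,i}, x_{l,j} \in \Fq$. By the definition of $\phi_B$ and $\phi_{B'}$, the matrix trace inner product equals $\Tr\bigl(\phi_B(\bc)\phi_{B'}(\bx)^\intercal\bigr) = \sum_{i,k} c_{k,i}x_{k,i}$. Meanwhile, from $\bx \in \calC^\perp$ one has $\bc \cdot \bx^\intercal = 0$ in $\Fqm$, and hence also $\Tr_{q^m/q}(\bc \cdot \bx^\intercal) = 0$. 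Expanding each product $c_i x_i$ via bilinearity and applying the orthonormality relation $\Tr_{q^m/q}(\beta_k \beta'_l) = \delta_{k,l}$ collapses the double sum over $(k,l)$ to its diagonal, giving $\Tr_{q^m/q}(\bc \cdot \bx^\intercal) = \sum_{i,k} c_{k,i} x_{k,i}$. Comparing the two expressions yields $\Tr\bigl(\phi_B(\bc)\phi_{B'}(\bx)^\intercal\bigr) = 0$ for every $\bc \in \calC$, so $\phi_{B'}(\bx) \in \phi_B(\calC)^\perp$.

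Combining this inclusion with the dimension equality forces $\phi_{B'}(\calC^\perp) = \phi_B(\calC)^\perp$, which is the stated theorem. The only delicate step is bookkeeping in the trace computation: one must distinguish the matrix trace inner product used in the matrix-code duality from the field trace $\Tr_{q^m/q}$ used implicitly when passing between $\bc \cdot \bx^\intercal \in \Fqm$ and its coordinates in $\Fq$. The orthonormality of $B$ and $B'$ is precisely the mechanism that reconciles these two pairings, reducing the $m^2$ cross terms $c_{k,i}x_{l,i}\Tr_{q^m/q}(\beta_k\beta'_l)$ to the $m$ diagonal contributions that match the matrix pairing.
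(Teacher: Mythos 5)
Your proof is correct: the identity $\Tr\bigl(\phi_B(\bc)\phi_{B'}(\bx)^\intercal\bigr) = \Tr_{q^m/q}(\bc\cdot\bx^\intercal)$, which follows from the duality relation $\Tr_{q^m/q}(\beta_k\beta'_l)=\delta_{k,l}$, gives the inclusion $\phi_{B'}(\calC^\perp)\subseteq\phi_B(\calC)^\perp$, and the dimension count $m(n-k)$ on both sides upgrades it to equality. The paper itself does not prove this statement but simply cites it from Gorla and Ravagnani, and your argument is essentially the standard proof given there.
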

\subsection{LRPC codes}
In this section we briefly describe low-rank parity check (LRPC) codes and their decoding algorithm. LRPC codes were introduced in 2013 by Gaborit, Murat, Ruatta and  Zémor \cite{gaborit2013}. Since then, they have been used in many cryptographic schemes \cite{RankSign2014, ROLLO, IBE-rank2017, Aragon2019} owing to their weak algebraic structure and efficient decoding.
 \begin{definition}[LRPC codes]
 An $\Fqm$-linear vector code $\calC \subseteq \Fqm^n$ of $\Fqm$-dimension $k$ is said to be an LRPC code of density $d$ if it admits a parity check matrix $H \in \Fqm^{(n-k) \times n}$ such that its support has dimension $\dim(\Span{\Fq}{H}) = d$. 
 \end{definition}
Given $\calH \subseteq \Fqm$ an $\Fq$-linear subspace of dimension $d$ and $H \in \calH^{(n-k) \times n}$ of rank $n-k$.
In most of the cases we will have $\Span{\Fq}{H} = \calH$ and the code $\calC$ having $H$ for parity check matrix is an LRPC code of density $d$.

The decoding algorithm for LRPC codes is based on the following observation.  
\begin{observation}\label{ob:prodAB}
Let $\calA = \Span{\Fq}{\Enum{\alpha}{1}{r}}, \calB = \Span{\Fq}{\Enum{\beta}{1}{d}} \subseteq \Fqm$ be two $\Fq$-linear subspaces of dimension $r$ and $d$ such that $rd \le m$.
Let the product space $\calA.\calB = \Span{\Fq}{\calA\calB}$ be the smallest $\Fq$-linear subspace that contains $\calA \calB = \{ab \mid a \in \calA, \, b \in \calB \}$.
Notice that if $a = \sum_{j=1}^r a_j \alpha_j \in \calA$ and $b = \sum_{k=1}^d b_k \beta_k \in \calB$ then 
$$
    ab = \sum_{j=1}^r \sum_{k=1}^d a_j b_k (\alpha_j \beta_k),
$$ 
therefore $\calA.\calB$ is generated by $\{\alpha_j \beta_k \mid (j,k) \in [r] \times [d] \}$ and has dimension upper-bounded by $rd$.
An equivalent way to express the above is to consider $\ba = (\Enum{\alpha}{1}{r})$ and $\bb = (\Enum{\beta}{1}{d})$ then $\calA = \Span{\Fq}{\ba}$, $\calB = \Span{\Fq}{\bb}$ and $\calA.\calB = \Span{\Fq}{\ba \otimes \bb}$.
\end{observation}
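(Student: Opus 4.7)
My plan is to prove the set equality
\[
    \calA.\calB \;=\; \Span{\Fq}{\{\alpha_j \beta_k : (j,k) \in [r]\times[d]\}}
\]
by double inclusion, after which the dimension bound will be immediate from the cardinality of the spanning set on the right.

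For the inclusion $\calA\calB \subseteq \Span{\Fq}{\{\alpha_j\beta_k\}}$, I would take a typical generator $ab$ of $\calA.\calB$ with $a \in \calA$ and $b \in \calB$, write $a = \sum_{j=1}^r a_j \alpha_j$ and $b = \sum_{k=1}^d b_k \beta_k$ with $a_j, b_k \in \Fq$, and expand the product in $\Fqm$ using distributivity together with the fact that scalars in $\Fq$ commute past multiplication in $\Fqm$:
\[
    ab \;=\; \sum_{j=1}^{r}\sum_{k=1}^{d} (a_j b_k)\,(\alpha_j \beta_k).
\]
Since each coefficient $a_j b_k$ lies in $\Fq$, this exhibits $ab$ as an $\Fq$-linear combination of the $\alpha_j \beta_k$. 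Hence every element of $\calA\calB$ already lives in $\Span{\Fq}{\{\alpha_j\beta_k\}}$, and because $\calA.\calB$ is by definition the smallest $\Fq$-linear subspace containing $\calA\calB$, we conclude $\calA.\calB \subseteq \Span{\Fq}{\{\alpha_j\beta_k\}}$. The reverse inclusion is trivial: each $\alpha_j \beta_k$ is itself an element of $\calA\calB \subseteq \calA.\calB$.

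With the identification $\calA.\calB = \Span{\Fq}{\{\alpha_j\beta_k\}}$ in hand, the dimension bound $\dim_{\Fq}(\calA.\calB) \le rd$ follows at once from the elementary fact that a subspace generated by a family of $rd$ vectors has dimension at most $rd$. The assumption $rd \le m$ plays no role in the proof of the bound itself; it only ensures the inequality carries meaningful content inside the ambient space $\Fqm$, since $\calA.\calB$ is in any case contained in $\Fqm$ and hence of dimension at most $m$. There is no genuine obstacle here — the entire argument is formal, relying only on the bilinearity of multiplication in $\Fqm$ viewed as an $\Fq$-algebra.
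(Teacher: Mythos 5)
Your proposal is correct and follows essentially the same argument as the paper: the observation is justified there by exactly this bilinear expansion $ab = \sum_{j,k} a_j b_k (\alpha_j \beta_k)$, which exhibits $\{\alpha_j\beta_k\}$ as a generating set of size $rd$. Your additional remarks --- making the double inclusion explicit and noting that the hypothesis $rd \le m$ is not needed for the bound itself --- are accurate refinements of the same reasoning.
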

Let $\calC$ be an LRPC code and $H$ its parity check matrix having a small support $\calH = \Span{\Fq}{\Enum{h}{1}{d}}$ for some $\Fq$-linearly independent $h_i \in \Fqm$.
Suppose a vector $\by = \bx + \be$ is received where $\bx \in \calC$ and $\be \in \Fqm^n$ is an error of small rank-weight $r$.
We have that $\Span{\Fq}{\be} = \calE$ with $\dim(\calE) = r$. If we consider the syndrome $\bs = H \by^\intercal = H \be^\intercal$ we have that $s_i = \sum_{j=1}^n h_{i,j} e_j$.
Notice that from Observation \ref{ob:prodAB} each product $h_{i,j} e_j$ belongs to $\calH.\calE$ therefore $\Span{\Fq}{\bs} \subseteq \calH.\calE$.
Since $\dim(\calH.\calE) \le rd$, if we consider each $s_i, i \in [n-k]$ as a uniformly distributed random element of
$\calH.\calE$, then, for $n-k \ge rd$, with a good probability we have $\Span{\Fq}{\bs} = \calH.\calE$.
This probability is estimated in the order of $1 - q^{rd - (n-k)}$ \cite{ROLLO}.
For a fixed element $h \in \Fqm$ and $\Fq$-linear subspace $\calE \subseteq \Fqm$ we denote with $ \calE h  = \{eh \mid e \in \calE\} $ the $\Fq$-linear subspace obtained by multiplying each element $e \in \calE$ with $h$.
Notice that for $\calH = \Span{\Fq}{\Enum{h}{1}{d}}$ an equivalent way to write $\calH.\calE$ is
$$
    \calH.\calE =  \calE h_1  + \cdots +  \calE h_d.
$$ 
Consider
$$
    \calH.\calE h_i^{-1} = \calE h_1 h_i^{-1} + \cdots + \calE h_i h_i^{-1} + \cdots + \calE h_d h_i^{-1}
$$
it is clear that $\calE \subseteq \calH.\calE h_i^{-1}$. If we consider the intersection of all these spaces, then we have $\calE = \bigcap_{i \in [d]} \calH.\calE h_i^{-1}$ with
a probability estimated to be at least  $1 - q^{-(d-1)(m-rd-r)}$ \cite{ROLLO}.

\medskip

From the knowledge of the error support, it is relatively easy to expand the linear system over $\Fqm$ given by the $n-k$ elements of the syndrome over $\Fq$.
Usually this gives $(n-k)rd$ linearly independent equations in $nr$ variables in $\Fq$ which can be uniquely solved when $(n-k)d \ge n$.
\ifdraft
\color{blue}
Ask Gaborit what he thinks about those equations being l.i. do we have a probability or at least a heuristic.
For the parameters in ROLLO we just need that $(n-k)d \ge n$, since $n-k=n/2$ and $d \ge 8$ we are well beyond this bound and have far more equations than indeterminates.
This means that even if some equations are linearly dependent it will be extremely unlikely (orders of magnitude smaller than the failure probability we already have) that there will not be at least $n$ out of $4n$ that are independent.
\color{black}
\fi

We saw how LRPC codes can be defined as $\Fqm$-linear rank metric vector codes.
In the next section, we will discuss a bilinear product over $\Fq^m$, based on which we generalize the construction of LRPC codes for matrix codes that are  
just $\Fq$-linear.

\section{A Bilinear Product over $\Fq^m$}
In this paper we will generalize the construction of LRPC codes by a bilinear product over $\Fq^m$
basisd on 3-tensors in $\Fqmmm$.
As a preparation we first introduce some basics of $3$-tensors over $\Fq$.

\subsection{$3$-tensors}\label{sec:3tensor}
Throughout this paper
elements in $\Fq^{n_1\times n_2 \times n_3}$ 
are called \textit{3-tensors} and 
will be denoted by upper-case letters. Given a $3$-tensor $T$ we will indicate with $t_{i,j,k}$ its $i,j,k$-th entry. Algebraically a 3-tensor $T\in \Fq^{n_1\times n_2 \times n_3}$  can be expressed as a vector of $n_3$ matrices $T_i$ of size $n_1 \times n_2$, i.e., $T = (T_1, \ldots, T_{n_3})$.
From a geometric perspective, we can visualize a $3$-tensor as a parallelepiped of size $n_1 \times n_2 \times n_3$ in a system of three coordinates as displayed in Fig. \ref{fig:Tplanes} (a), where the first index indicates the vertical axis, the second indicates the horizontal axis and the third indicates the axis perpendicular to the paper.
Given a 3-tensor $T$, one obtains a matrix of size $n_2\times n_3$ when fixing 
the 1st index of $T$ to a certain value $i$ for $1\leq i \leq n_1$. Likewise, one obtains a matrix of size $n_1 \times n_3$ when fixing the 2nd index and a matrix of size $n_1\times n_2$ when fixing the 3rd index. We will denote by $T_{i, *,*}, T_{*,j,*}, T_{*,*,k}$ the matrices derived by fixing the 1st, 2nd, 3rd index of $T$ as $i, j, k$, respectively, where the wildcard $\Wildcard$ indicates free choice for the corresponding index.

\begin{example}\label{Ex1}

Consider a tensor
$$
T = 
\begin{pmatrix}
    \begin{matrix}
        1 & 2 & 0 \\
        0 & 2 & 0
    \end{matrix}
    & 
    \vline
    &
    \begin{matrix}
        1 & 1 & 1 \\
    0 & 3 & 0
    \end{matrix}
    & 
    \vline
    &
    \begin{matrix}
        1 & 0 & 4 \\
        1 & 0 & 0
    \end{matrix}
    & 
    \vline
    &
    \begin{matrix}
        1 & 0 & 1 \\
        1 & 1 & 5
    \end{matrix}
\end{pmatrix}
\in \F_7^{2 \times 3 \times 4}.
$$
\text{Fig. \ref{fig:Tplanes}} (b) shows $3$ examples of this notation over the tensor $T$. 
More concretely, fixing the 1st index of $T$ to $2$, the 2nd index of $T$ to $3$, and the 3rd index of $T$ to $2$, respectively, gives the following three matrices
$$\Slice{T}{1}{2} = 
\begin{pmatrix}  
    0 & 0 & 1 & 1 \\
    2 & 3 & 0 & 1 \\
    0 & 0 & 0 & 5
\end{pmatrix},
	T_{*,3,*} = 
	\begin{pmatrix}  
		0 & 1 & 4 & 1 \\
		0 & 0 & 0 & 5 \\
	\end{pmatrix},
	T_{*,*,2} = 
	\begin{pmatrix}  
		1 & 1 & 1 \\
		0 & 3 & 0  
	\end{pmatrix}.
$$
\end{example}
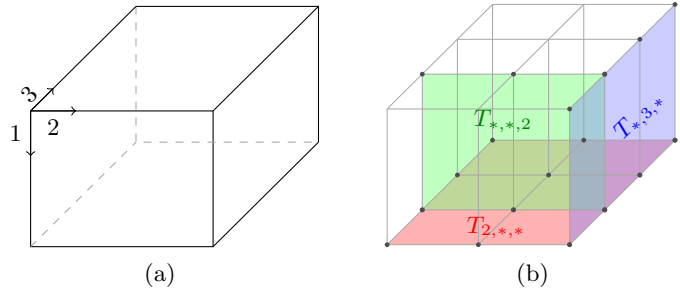
\begin{figure}
\centering
\subfloat[]{
\begin{tikzpicture}[scale = 0.6]\label{Fig:1}
    
        \draw[->] (0,0,0) -- (1,0,0); 
		\draw[->] (0,0,0) -- (0,-1,0); 
		\draw[->] (0,0,0) -- (0,0,-1.3);
		\draw (0.5,0,0)   node [anchor = north] {$2$};
		\draw (0,-0.5,0)  node [anchor = east] {$1$};
		\draw (-0.3, 0.1, -1.3) node [anchor = east, rotate = 45] {$3$};
		
		\draw (0,0,0)  -- (0,-3,0)  -- (4, -3, 0)  -- (4, 0, 0)  -- (0,0,0);
		\draw[gray!70, dashed] (0,0,-6) -- (0, -3, -6) -- (4, -3, -6);
		\draw (4, -3, -6) -- (4, 0, -6) -- (0, 0, -6);
		\draw (0, 0, 0) -- (0, 0, -6);
		\draw[gray!70, dashed] (0, -3, 0) -- (0, -3, -6);
		\draw (4,-3, 0) -- (4, -3, -6);
		\draw (4, 0, 0) -- (4, 0, -6);
	\end{tikzpicture}
    }\quad \quad 
    \subfloat[]{
     \begin{tikzpicture}[scale = 0.6]
		\fill[red!50, opacity = 0.6] (2,-3,-2)  -- (6,-3,-2) -- (6,-3,-8) -- (2,-3,-8) -- (2,-3,-2);
		
		\fill[green!50, opacity = 0.5] (2,0,-4)  -- (6,0,-4) -- (6,-3,-4) -- (2,-3,-4) -- (2,0,-4);
		
		\fill[blue!50, opacity = 0.4] (6,0,-2)  -- (6,-3,-2) -- (6,-3,-8) -- (6,0,-8) -- (6,0,-2);
	
		\foreach \i in {1,...,3} {
			\foreach \j in {-1,...,0} {
				\draw [very thin,gray!70] (2*\i,3*\j,-8) -- (2*\i,3*\j,-2);
			}
		}
		
		\foreach \i in {1,...,3} {
			\foreach \k in {-4,...,-1} {
				\draw [very thin,gray!70] (2*\i, -3,2*\k) -- (2*\i, 0,2*\k);
			}
		}
		
		\foreach \j in {-1,...,0} {
			\foreach \k in {-4,...,-1} {
				\draw [very thin,gray!70] (2,3*\j,2*\k) -- (6,3*\j,2*\k);
			}
		}

		\foreach \x in {1,2,3}{
			\foreach \y in {0,-1}{
				\foreach \z  in {-1,-2,-3,-4}{
				}
			}
		}
		
		\foreach \x in {1,2,3}{
			\foreach \y in {-1}{
				\foreach \z  in {-1,-2,-3,-4}{
					\fill[black!70, opacity = 1] (2*\x,3*\y,2*\z) circle (1.5pt);
				}
			}
		}
		
		\foreach \x in {1,2,3}{
			\foreach \y in {0,-1}{
				\foreach \z  in {-2}{
					\fill[black!70, opacity = 1] (2*\x,3*\y,2*\z) circle (1.5pt);
				}
			}
		}
		
		\foreach \x in {3}{
			\foreach \y in {0,-1}{
				\foreach \z  in {-1,-2,-3,-4}{
					\fill[black!70, opacity = 1] (2*\x,3*\y,2*\z) circle (1.5pt);
				}
			}
		}
		\draw[red] (4, -3, -3) node [anchor = center] {$\Slice{T}{1}{2}$};
		\draw[black!50!green] (3, -1.8, -6) node [anchor = center] {$\Slice{T}{3}{2}$};
		\draw[blue] (6, -1.7, -6) node [anchor = center, rotate = 45] {$\Slice{T}{2}{3}$};
	\end{tikzpicture}
    }
    \caption{Visualization and Slices of 3-tensors}\label{fig:Tplanes}
\end{figure}

Multiplications over $\Fq^m$ associated with 3-tensors  will be a core feature in the proposed generalized LRPC codes. Below we shall introduce multiplications between 3-tensors and vectors with respect to indices 1, 2 and 3, which, in a visualized manner, can be interpreted as directional multiplications.
We first recall the analogous multiplications between matrices and vectors. A two-dimensional matrix $M \in \Fq^{m \times n}$ can be seen as a 2-tensor, where the 1st index indicates the vertical axis and the 2nd index indicates the horizontal axis. 
Given two vectors $\bx \in \Fq^m$, $\by \in \Fq^n$, the product $\bx M$ is a linear combination of the rows of $M$ w.r.t $\bx$ along the vertical direction, and the product $M \by^\intercal$ is a linear combination of the columns of $M$ w.r.t $\by$ along the horizontal direction. We may think of the two products as a vertical multiplication and a horizontal multiplication, respectively. The following directional multiplications between 3-tensors and vectors are defined in a similar manner.

\begin{definition} \label{def:directional_mult}
Given a 3-tensor $T \in \Fq^{n_1 \times n_2 \times n_3}$, vectors $\bx \in \Fq^{n_1}$, $\by \in \Fq^{n_2}$, $\bz \in \Fq^{n_3}$, we define the vertical multiplication between $T$ and $\bx$, denoted by,
$T_{\bx, *, *}$, as the linear combination of $T$ w.r.t $\bx$ along the vertical direction, i.e.,
$$
T_{\bx, *, *} := x_1 T_{1, *, *} + \dots + x_{n_1} T_{n_1, *, *}=\sum_{i=1}^{n_1} x_i T_{i,*,*},
$$ where the $j,k$-th entry of $T_{\bx, *, *}$ is given by $\sum_{i=1}^{n_1} x_i t_{i,j,k}$. Similarly, the \text{horizontal} \text{multiplication} between $T$ and $\by$ defines the matrix
$$
T_{*, \by, *} :=\sum_{j=1}^{n_2} y_j T_{*,j,*},
$$ and 
the \text{perpendicular multiplication} between $T$ and $\bz$ defines the matrix 
$$
T_{*, *, \bz} :=\sum_{k=1}^{n_3} z_k T_{*,*,k}.
$$ 
\end{definition}


The following example illustrates the directional multiplications.

\begin{example}\label{Ex2}
Let $T$ be the 3-tensor given in Example \ref{Ex1}, let $\bx = (1,1) \in \F_7^2$, $\by = (1,0,2) \in \F_7^3$, $\bz=(1,0,0,1)\in \F_7^4$. According to Definition \ref{def:directional_mult}, the vertical multiplication between $T$ and $\bx$ is given by
$$
	T_{\bx,*,*} = T_{1,*,*} + T_{2,*,*} 
	= 
		\begin{pmatrix}  
			1 & 1 & 1 & 1 \\
			2 & 1 & 0 & 0 \\
			0 & 1 & 4 & 1
		\end{pmatrix}
	+
		\begin{pmatrix}  
			0 & 0 & 1 & 1 \\
			2 & 3 & 0 & 1 \\
			0 & 0 & 0 & 5
		\end{pmatrix}
	=
		\begin{pmatrix}  
			1 & 1 & 2 & 2 \\
			4 & 4 & 0 & 1 \\
			0 & 1 & 4 & 6
		\end{pmatrix}	
$$
Similarly, the horizontal multiplication between $T$ and $\by$ and  the perpendicular multipication between $T$ and $\bz$, respectively,  are given as follows:
$$
	T_{*,\by,*} = T_{*,1,*} + 2 T_{*,3,*}
	=
		\begin{pmatrix}  
			1 & 1 & 1 & 1 \\
			0 & 0 & 1 & 1 
		\end{pmatrix}
	+
		2
		\begin{pmatrix}  
			0 & 1 & 4 & 1 \\
			0 & 0 & 0 & 5 
		\end{pmatrix}
	=
		\begin{pmatrix}  
			1 & 3 & 2 & 3 \\
			0 & 0 & 1 & 4 
		\end{pmatrix}
$$
and 
$$ 
T_{*,*,\bz} = T_{*,*,1} + T_{*,*,4} = 
    \begin{pmatrix}  
        1 & 2 & 0 \\
        0 & 2 & 0  
    \end{pmatrix}
    +
    \begin{pmatrix}  
        1 & 0 & 1 \\
        1 & 1 & 5  
    \end{pmatrix}
    =
    \begin{pmatrix}  
        2 & 2 & 1 \\
        1 & 3 & 5  
    \end{pmatrix}.
$$    
\end{example}
Let $\be_i \in \Fq^{n_1}$ denote the $i$-th element of the standard basis of $\Fq^{n_1}$. (i.e. the vector of length $n_1$ which takes $1$ in its $j$-th position and $0$ elsewhere).
The vertical multiplication between $T$ and $\be_i$ is $\Slice{T}{1}{\be_i} = \Slice{T}{1}{i}$. Similarly the horizontal and the perpendicular multiplication with the standard vectors $\be_j$ of length $n_2$ and $\be_k$ of length $n_3$ is $\Slice{T}{2}{\be_j} = \Slice{T}{2}{j}$ and $\Slice{T}{3}{\be_k} = \Slice{T}{3}{k}$.

Notice that the same notation can be easily adapted to matrices. Consider $M \in \Fq^{m \times n}, \bx \in \Fq^m, \by \in \Fq^n$ we can use the same notation to express
$$
    \bx M = M_{\bx,\Wildcard} \quad  M \by^\intercal = M_{\Wildcard, \by}.
$$
Suppose we want to multiply the matrix $\Slice{T}{2}{\by}$ with the vector $\bx$ along its first index.
We can extend the notation introduced above to 
$$
    (\bx \Slice{T}{2}{\by})_k = (T_{\bx, \by, \Wildcard})_k = \sum_{i= 1}^{n_1} \sum_{j = 1}^{n_2} t_{i,j,k} x_i y_j,
$$
where $(T_{\bx, \by, \Wildcard})_k$ indicates the $k$-th component of the vector $T_{\bx, \by, \Wildcard}.$

With the multiplication introduced in Definition \ref{def:directional_mult}, we can express an $\Fq$-linear matrix code $\calC$ in $\Fqmn$ by a 3-tensor generator 
$G=(G_1,G_2,\dots, G_k)$ in $\Fq^{m\times n\times k}$ in a compact form as 
$$
\calC = \{\Slice{G}{3}{\bz} \mid \bz \in \Fq^k\}.
$$ Similarly, the dual of $\calC$ can be written as $\calC^\perp = \{\Slice{H}{3}{\bz} \mid  \bz\in \Fq^{n-k}\}$, where $H\in\Fq^{m\times n\times (n-k)}$ is a 3-tensor parity check
of $\calC$.

\medskip

In Section \ref{Sec:rank_codes} we saw the different inner products in the duality of $\Fqm$-linear vector codes and their matrix codes under an isomorphism $\phi_B$. Below
we shall take a closer look at how the difference affects the decoding of $\Fqm$-linear rank metric codes when they are in matrix form, which indicates the need for the $T$-product over $\Fq^m$ in the next subsection.


Given a basis $B$ of $\Fqm$ over $\Fq$, intuitively, a parity check matrix $H \in \Fqm^{(n-k) \times n}$ of an $\Fqm$-linear code $\calC \subseteq \Fqm^n$ can be converted into a $3$-tensor $\phi_B(H) \in \Fq^{m \times n \times (n-k)}$.
Every row of $H$ will be transformed by $\phi_B$ into a matrix in $\Fq^{m \times n}$.
The $3$-tensor $\phi_B(H)$ is made by arranging the $n-k$ matrices corresponding to the $n-k$ rows of $H$ along the direction perpendicular to the paper.
For the matrix $H$, having a small support means that all the entries $h_{i,j}$ belong to a small support $\Span{\Fq}{H} = \Span{\Fq}{\Enum {\beta}{1}{d}} \subseteq \Fqm$ having dimension $d<m$.
The entry $h_{i,j}$ corresponds to the $j$-th element of the $i$-th row of $H$. In $\phi_B(H)$ it will correspond to the $j$-th column of the of the $i$-th matrix, therefore $\phi_B(h_{i,j}) = \phi_B(H)_{\Wildcard, j,i}$. Since each $h_{i,j} \in \Span{\Fq}{\Enum {\beta}{1}{d}},$ each vector $\phi_B(H)_{\Wildcard, j,i}$ belongs to $\phi_B(\Span{\Fq}{\Enum {\beta}{1}{d}}) \subseteq \Fqm$ which is a subspace of dimension $d < m$. 
An $\Fq$-linear matrix code $\calC \subseteq \Fq^{m \times n}$ of $\Fq$-dimension $km$ has a parity check tensor $H \in \Fq^{m \times n \times (nm - km)}$.
Following the intuition above, we can define the support of the tensor $H$ as the span of all the vectors $H_{\Wildcard, i,j}$. 
A low density parity check tensor would be a tensor with a small support.

The problem with matrix codes lies in the generalization of the syndrome.
Let $\calC \subseteq \Fqm^n$ be an $\Fqm$-linear code of dimension $k$.
Let $H \in  \Fqm^{(n-k) \times n}$ be a parity check matrix of $\calC$ and let $\be \in \Fqm^n$ be an error.
We can compute the syndrome of $\be$ as $\bs = \be H^\intercal \in \Fqm^{n-k}$.
Since the syndrome is a vector of $\Fqm^{n-k}$ its support can be any $\Fq$-linear subspace of $\Fqm$ of dimension at most $n-k$.
Knowing the support of the syndrome, as in the case of LRPC codes, might give us some information about the support of the error.
In the case of matrix codes, the only two possible supports for the syndrome are $\{\bzero\}$ and $\Fq$.
Consider the code $\calC' = \phi_B(\calC) \subseteq \Fq^{m \times n}$ and the matrix $E = \phi_B(\be) \in \Fq^{m \times n}$.
The code $\calC'$ has dimension $km$ over $\Fq$ and it will admit a parity check tensor $H' \in \Fq^{m \times n \times m(n-k)}$.
The syndrome $\bs' \in \Fq^{m(n-k)}$ of $E$ in this case will be obtained as $s'_i = \Tr(\Slice{H}{3}{i}E^\intercal) \in \Fq$.
The support of $\bs'$  is the space $\Fq$ if $E$ does not belong to the code and $\{ \bzero \}$ otherwise.

Notice that both $\bs$ and $\bs'$ can be described by $(n-k)m$ elements of $\Fq$.
The difference between $\bs$ and $\bs'$ is how those elements are grouped together.
For $\bs$ it is natural to group those elements in sub-arrays of $m$ elements corresponding to $\phi_B(s_i)$.
On the other hand the lack of structure in the second syndrome $\bs'$ creates a problem in decoding an LRPC code in the context of matrix codes.
To overcome this problem, we introduce a product between vectors in $\Fq^m$ basisd on $3$-tensors, which will allow for a more structured syndrome.

\subsection{T-product over $\Fq^m$}
Given a basis $B$ of $\Fqm$ over $\Fq$, the isomorphism $\phi_B: \Fqm \rightarrow \Fq^m$ preserves the structure of vector space over $\Fq$.
In a field we have two binary operations $+,\cdot: \Fqm \times \Fqm \rightarrow \Fqm.$
The binary operation $+$ and the product by a scalar $\lambda \in \Fq \subseteq \Fqm$ are naturally preserved by $\phi_B$.
That is $\phi_B(\lambda_1 \bx + \lambda_2 \by) = \lambda_1 \phi_B(\bx) + \lambda_2 \phi_B(\by)$.
As we do not have a standard way to define a product between two elements of $\Fq^m$, applying the isomorphism $\phi_B$ we loose the field structure.

The product over $\Fqm$ has two properties that are fundamental for the decoding algorithm of the LRPC codes.
As pointed out in Observation \ref{ob:prodAB},
given two subspaces $\calA, \calB \subseteq \Fqm$ of dimension $r$ and $d$, the set $\calA\calB = \{ ab \mid a \in \calA, b \in \calB\}$ is contained in a space $\calA.\calB$ of dimension upper bounded by $rd$.
Thanks to this property we can connect the support of the syndrome with the support of the parity check matrix and the support of the error.
This property alone would not be enough for recovering the support of the error.
A second fundamental property of the product in the field $\Fqm$ is that the equation $xb = c$ admits exactly one solution $x = cb^{-1}$ when $b \neq 0$.
This allows to recover $\calA$ from the knowledge of $\calA.\calB = \Span{\Fq}{\calA \calB}$ and $\calB$ through the fact that, for  any $0 \neq b \in \calB$, we have that $\calA \subseteq (\calA.\calB) b^{-1} = \{sb^{-1} \mid s \in \calA.\calB\}$. 
Intersecting those sets it is then possible to recover $\calA$ with a good probability.
Keeping in mind these two properties, we first consider a product over $\Fq^m$ that satisfies these two properties.

Let us briefly introduce the notion of presemifield, which is an important algebraic structure satisfying $3$ properties \cite[Sec. 2.3]{Knuth1965}. 
\begin{definition}[Finite Presemifield]\label{def:Presemifield}
    The triple $S, +, \star$ where $S$ is a finite set and two binary operation $+,\star$ are two binary operations $+,\star: S \times S \rightarrow S$ is called a \textbf{finite presemifield} if it has the following $3$ properties:
    \begin{enumerate} 
        \item The pair $S,+$ form an Abelian group.
        \item For any $a,b,c \in S$ we have $(a+b) \star c = ac + bc$ and $a \star (b+c) = ab + ac$.
        \item For all $a,b \in S$ we have that $a \star b = 0$ iff $a = 0$ or $b=0$.
    \end{enumerate}
\end{definition}
Notice that the third condition, when $S$ is finite, is equivalent to ask that $a \star x = b$ and $y \star a = c$ always have a unique solution.

We will need a slightly more specialized structure.
\begin{definition}[$\Fq$-linear presemifield] \label{def:FqPresemifield}
    Let $\star: \Fq^m \times \Fq^m \rightarrow \Fq^m$ be a binary operation over $\Fq^m$. We say that $\Fq^m, +, \star$ is an \textbf{$\Fq$-linear presemifield} if it satisfies the following three properties.  For any $\ba,\bb,\bc \in \Fq^m$,
    \begin{itemize}
        \item The pair $S,+$ form an Abelian group.
        \item (bilinear) $(\mu \ba + \nu \bb) \star \bc = \mu (\ba \star \bc) + \nu( \bb \star \bc)$ and $\ba \star (\mu \bb + \nu \bc) = \mu(\ba \star \bb) + \nu(\ba \star \bc)$ for any $\mu, \nu \in \Fq.$ 
        \item (invertibile) Given the equation $\bx \star \bb = \bc$, for all $\bb \neq \bzero$ there exists a unique solution $\bx$. 
        In other words the right multiplication $\R_\bb(\bx) = \bx \star \bb$ is invertible.
        Given that $\Fq^m$ has a finite number of elements we also have that $\R_\bb$ is bijective.
    \end{itemize}
\end{definition}

The second condition in Definition \ref{def:FqPresemifield} requires that the product $\star$ is distributive and $\Fq$-linear.
When $q$ is a prime number it coincides with the second property of Definition \ref{def:Presemifield}.
In addition, the third condition in Definition \ref{def:Presemifield} and Definition \ref{def:FqPresemifield} are equivalent. 
Notice that if $\ba' \star \bb = \ba \star \bb = \bc$ then $(\ba' - \ba) \star \bb = \bzero$ so either $\ba = \ba'$ or $\bb$ has a non null zero divisor.
In the rest of the paper we will refer to a product that respect the third condition as an \textbf{invertible product}.

An example of $\Fq$-linear presemifield defined over the set $\Fq^m$ is given by the composition of the standard product over $\Fqm$ with $\phi_B$.
Explicitly, the product defined as $\ba \star \bb = \phi_B(ab)$ where $a = \phi_B^{-1}(\ba), b = \phi_B^{-1}(\bb)\in \Fqm$
and $+$ here is the component-wise addition of two vectors.
For the first property we clearly have that $\Fq^m, +$ is an Abelian group.
The second property comes from the bilinearity of the product in a field.
Consider the right multiplication $\R_\bb(\bx) = \bx \star \bb = \phi_B(x b)$, its inverse $\R_b^{-1}(x)$ is simply the function $\R_{\bb^{-1}}(\bx) = \bx  \star \bb^{-1} =  \phi_B(x b^{-1})$.
We have that $\R_\bb(\R_{\bb^{-1}}(\bx)) = \bx$ which is just a cumbersome way to say that, for $0 \neq b \in \Fqm$, we have that $(xb)b^{-1} = x$ for all $x \in \Fqm.$

In the rest of this subsection we will show how the structure of finite $\Fq$-linear presemifield $(\Fq^m, + , \star)$ preserves the two essential properties that allows the decoding of LRPC codes over the field $\Fqm$.
We will then introduce a bilinear multiplication over $\Fq^m$ that, under some special conditions already studied by Knuth in \cite{Knuth1965}, gives rise to an $\Fq$-linear presemifield.
Finally we will show how it is possible to relax these conditions and still being able to decode for a much larger class of partially invertible products that we will define later.
\begin{theorem}
    \label{th:GeneralizedProductSpace}
    Let $(\Fq^m, +, \star)$ be an $\Fq$-linear presemifield. Let $\calA = \Span{\Fq}{\Enum{\balpha}{1}{r}}$, $\calB = \Span{\Fq}{\Enum{\bbeta}{1}{d}} \subseteq \Fq^m$ be two linear subspaces of dimension $r$ and $d$ such that $rd \le m$. 
    Consider $\calA \star \calB = \{\ba \star \bb \mid \ba  \in \calA, \bb \in \calB\}$ and let $\Span{\Fq}{\calA \star \calB}$ be the smallest subspace of $\Fq^m$ containing $\calA \star \calB.$
    We have the following two properties:
    \begin{enumerate}
        \item \label{GenProd1} $\dim(\Span{\Fq}{\calA \star \calB}) \le rd$
        \item \label{GenProd2}
        $
        \calA \subseteq \R_{\bb}^{-1}(\Span{\Fq}{\calA \star \calB}) = \{\R_{\bb}^{-1}(\bc) \mid \bc \in  \Span{\Fq}{\calA \star \calB} \}, \forall  0 \ne \bb \in \calB.
        $
        \end{enumerate}
\end{theorem}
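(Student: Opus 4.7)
The plan is to mirror the proof of Observation \ref{ob:prodAB}, replacing the field multiplication of $\Fqm$ by the abstract bilinear, right-invertible operation $\star$. The two desired properties correspond exactly to the two abstract features of ordinary multiplication that Observation \ref{ob:prodAB} and the LRPC decoder rely on: bilinearity gives the dimension bound, and right-invertibility gives the preimage inclusion.

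For property \ref{GenProd1}, I would take generic elements $\ba = \sum_{i=1}^r a_i \balpha_i \in \calA$ and $\bb = \sum_{j=1}^d b_j \bbeta_j \in \calB$ with $a_i, b_j \in \Fq$, and expand their product via iterated use of the bilinearity axiom from Definition \ref{def:FqPresemifield}:
$$
\ba \star \bb \;=\; \Bigl(\sum_{i=1}^{r} a_i \balpha_i\Bigr) \star \Bigl(\sum_{j=1}^{d} b_j \bbeta_j\Bigr) \;=\; \sum_{i=1}^{r}\sum_{j=1}^{d} a_i b_j \,(\balpha_i \star \bbeta_j).
$$
This shows $\calA \star \calB \subseteq \Span{\Fq}{\{\balpha_i \star \bbeta_j \mid (i,j) \in [r]\times[d]\}}$, so $\Span{\Fq}{\calA \star \calB}$ is a subspace of an $\Fq$-span generated by at most $rd$ vectors, yielding the bound $\dim(\Span{\Fq}{\calA \star \calB}) \le rd$. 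The hypothesis $rd \le m$ is not used in the inequality itself; it only ensures the bound is non-trivial inside $\Fq^m$.

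For property \ref{GenProd2}, I would fix an arbitrary $\bzero \ne \bb \in \calB$ and $\ba \in \calA$. By definition of $\R_\bb$, we have $\R_\bb(\ba) = \ba \star \bb$, and this element lies in $\calA \star \calB$, hence in $\Span{\Fq}{\calA \star \calB}$. The invertibility axiom of the $\Fq$-linear presemifield guarantees that $\R_\bb$ is a bijection on $\Fq^m$ whenever $\bb \ne \bzero$, so the inverse map $\R_\bb^{-1}$ is well defined on all of $\Fq^m$, in particular on $\Span{\Fq}{\calA \star \calB}$. Applying $\R_\bb^{-1}$ to the membership $\ba \star \bb \in \Span{\Fq}{\calA \star \calB}$ yields $\ba = \R_\bb^{-1}(\ba \star \bb) \in \R_\bb^{-1}(\Span{\Fq}{\calA \star \calB})$, and since $\ba \in \calA$ was arbitrary, the inclusion follows.

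There is no serious obstacle: both parts are essentially unpackings of the presemifield axioms. The only subtlety is to use right-invertibility (not a full two-sided inverse or associativity), which is precisely what Definition \ref{def:FqPresemifield} provides, and to be careful that $\R_\bb^{-1}$ denotes preimage under a bijection of sets rather than multiplication by some $\bb^{-1}$, since $\star$ need not admit a multiplicative identity or inverses in the sense of a field.
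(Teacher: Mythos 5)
Your proof is correct. Part \ref{GenProd1} is essentially identical to the paper's argument: expand $\ba \star \bb$ by bilinearity and observe that the $rd$ products $\balpha_i \star \bbeta_j$ generate $\Span{\Fq}{\calA \star \calB}$. For part \ref{GenProd2} you take a genuinely shorter route. The paper first shows that each $\calA \star \bbeta_j$ is a linear subspace (by linearity in the first argument), decomposes $\Span{\Fq}{\calA \star \calB} = \calA \star \bbeta_1 + \cdots + \calA \star \bbeta_d$, and then, for an arbitrary nonzero $\bb \in \calB$, performs a change of basis of $\calB$ so that $\bb$ becomes a basis element, concluding $\calA = \R_\bb^{-1}(\calA \star \bb) \subseteq \R_\bb^{-1}(\Span{\Fq}{\calA \star \calB})$. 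You instead argue pointwise: for each $\ba \in \calA$, the element $\R_\bb(\ba) = \ba \star \bb$ already lies in $\calA \star \calB \subseteq \Span{\Fq}{\calA \star \calB}$ by the very definition of $\calA \star \calB$, and bijectivity of $\R_\bb$ for $\bb \ne \bzero$ then puts $\ba$ in the preimage. This is logically sufficient and avoids both the decomposition and the basis change entirely. What the paper's longer version buys is expository rather than logical: the decomposition into the subspaces $\calA \star \bbeta_j$ is exactly the structure the decoding algorithm intersects over in Section 5, so exhibiting it inside the proof foreshadows the algorithm. Your closing remarks --- that $rd \le m$ is not needed for the inequality, and that $\R_\bb^{-1}$ must be read as the set-theoretic inverse of a bijection rather than multiplication by some inverse element --- are both accurate and match the paper's own caveats.
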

\begin{proof}
    Consider the set $\Span{\Fq}{\calA \star \calB}$. 
    We want to show that its dimension $\dim(\Span{\Fq}{\calA \star \calB}) = k$ is upper bounded by $rd$.
    By definition the subspace $\Span{\Fq}{\calA \star \calB}$ is generated by the set $\calA \star \calB = \{\ba \star \bb \mid \ba  \in \calA, \bb \in \calB\}$. 
    A generic element $\ba \star \bb \in \calA \star \calB$ can be expressed as
        $$\ba \star \bb = \Big(\sum_{i \in [r]} a_i \balpha_i \Big) \star \Big(\sum_{j \in [d]} b_j \bbeta_j \Big) = \sum_{(i,j) \in [r] \times [d]} \lambda_{j,i} \mu_{j,j} (\balpha_i \star \bbeta_j).$$
    This means that $\{(\balpha_i \star \bbeta_j)\}_{(i,j) \in [r] \times [d]}$ is a set of generators of $\Span{\Fq}{\calA \star \calB}$ of size at most $rd$, which means that the dimension of $\Span{\Fq}{\calA \star \calB}$ is upper bounded by $rd$.

    For the second part of the proof, consider the set $\calA \star \bb = \{ \ba \star \bb \mid \ba \in \calA \}$, thanks to the linearity in the first argument of $\star$, it is a linear subspace of $\Fq^m$.
    We can decompose the space $\Span{\Fq}{\calA \star \calB}$ as 
    $$
        \Span{\Fq}{\calA \star \calB} = \calA \star \bbeta_1 + \cdots + \calA \star \bbeta_d,
    $$
    where 
    $
        \calA \star \bbeta_j = \Span{\Fq}{\balpha_1 \star \bbeta_j, \ldots, \balpha_r \star \bbeta_j}.
    $
    Notice that each subspace $\calA \star \bbeta_j$ is contained in $\Span{\Fq}{\calA \star \calB}$ and the union of all the subspaces covers a basis of $\Span{\Fq}{\calA \star \calB}.$
    
    Since $\star$ is invertible, if $\bb \neq 0$, we have that $\R_{\bb}^{-1}(\calA \star \bb) = \calA$.
    In particular $\R^{-1}_{\bbeta_i}(\calA \star \bbeta_i) = \calA$, therefore $\calA \subseteq \R_{\bbeta_i}^{-1}(\Span{\Fq}{\calA \star \calB}).$
    For a generic $0 \ne \bb \in \calB$, we can always write a basis of $\calB$ which includes $\bb$.
    That is $\calB = \Span{\Fq}{\bb, \Enum{\bbeta'}{2}{d}}$.
    With this change of basis we can consider the decomposition
    $$
        \Span{\Fq}{\calA \star \calB} = \calA \star \bb + \calA \star \bbeta_2'  + \cdots + \calA \star \bbeta_d'
    $$
    and conclude that $\calA = \R^{-1}_\bb(\calA \star \bb) \subseteq \R^{-1}_{\bb}(\Span{\Fq}{\calA \star \calB})$.
    \qed
\end{proof}
Remark that invertibility is not required to prove the first property in Theorem \ref{th:GeneralizedProductSpace}.
For a non-invertible bilinear product $\star$ it is still true that $\dim(\Span{\Fq}{\calA \star \calB}) \le \dim(\calA) \dim(\calB)$.

We have already seen how the vector space $\Fq^m$ considered with the product given by the composition between the standard product over $\Fqm$ and an isomorphism $\phi_B: \Fqm \rightarrow \Fq^m$ is an $\Fq$-linear presemifield.
Unsurprisingly this construction is isomorphic to the finite field $\Fqm$, if we decide to use this product we would get the LRPC codes we already know.

Below we introduce a more generic bilinear product basisd on a $3$-tensor $T$.
A very similar construction was introduced by Knuth in \cite{Knuth1965}. 
For details about our notation for tensors we refer to Section \ref{sec:3tensor}.

\begin{definition}[$T$-product]\label{def:Tproduct}
    Let $T $ be a $3$-tensor in $\Fq^{m \times m \times m}$. For $\ba, \bb \in \Fq^m$ we define the $T$-product between $\ba$ and $\bb$ as  
		$$
			\ba \Tdot \bb = \ba \Slice{T}{2}{\bb} = T_{\ba,\bb,*}.
		$$
    More specifically,
    for $\ba = (\Enum{a}{1}{m})$ and $\bb = (\Enum{b}{1}{m})$, we define the $T$-product $\bc = \ba \Tdot \bb = T_{\ba, \bb, \Wildcard}$ given by
        $$
            c_k = \sum_{i=1}^m \sum_{j=1}^m a_i b_j t_{i,j,k} , \quad k \in [m].
        $$
\end{definition}
It is equivalent to define the $T$-product as 
$
    \ba \Tdot \bb = \bb \Slice{T}{1}{\ba}.
$
To clarify the usage of this product we give an example here.
\begin{example}\label{ex:TproductPractical}
Consider the following $3$-tensor $T \in \F_7^{3 \times 3 \times 3}$
$$
	T = (\Slice{T}{3}{1}, \Slice{T}{3}{2}, \Slice{T}{3}{3}) =
	\begin{pmatrix}
		\begin{matrix}
			1 & 0 & 3 \\
			3 & 4 & 0 \\
			0 & 1 & 0 
		\end{matrix}
		&
		\vline
		&
		\begin{matrix}
			2 & 2 & 2 \\
			1 & 3 & 3 \\
			0 & 2 & 1 
		\end{matrix}
		&
		\vline
		&
		\begin{matrix}
			1 & 5 & 6 \\
			3 & 2 & 2 \\
			1 & 2 & 2 
		\end{matrix}
	\end{pmatrix}.
$$
Consider two vectors $\ba = \begin{pmatrix} 2 & 0 & 2\end{pmatrix}$ and $\bb = \begin{pmatrix} 1 & 1 & 1 \end{pmatrix}$,
using Definition \ref{def:Tproduct} we have 
$$
    \ba \Tdot \bb = \ba \Slice{T}{2}{\bb} = 
    (2,0,2)
    \begin{pmatrix}
        4 & 6 & 5 \\
        0 & 0 & 0 \\
        1 & 3 & 5
    \end{pmatrix}
    = (3,4,6),
$$
equivalently we can compute the same product as
$$
    \ba \Tdot \bb = \bb \Slice{T}{1}{\ba} =
    (1,1,1)
    \begin{pmatrix}
        2 & 4 & 4 \\
        2 & 1 & 0 \\
        6 & 6 & 2
    \end{pmatrix}
    = (3,4,6).
$$
Notice that the matrix $\Slice{T}{2}{\bb}$ is not of full rank.
It means that there exist multiple values of $\ba$ such that $\ba \Tdot \bb = (3,4,6)$.
For example we have that $(2,0,2) \Tdot (1,1,1) = (2,1,2) \Tdot (1,1,1) = (3,4,6).$
For this choice of $T$ the $T$-product is not invertible, therefore in this case $(\Fq^m, +, \Tdot)$ is not an $\Fq$-linear presemifield.
\end{example} \label{ex:Tproduct}
The $T$-product is bilinear for any tensor $T \in \Fq^{m \times m \times m}$, as we just saw in Example \ref{ex:TproductPractical}, being invertible, in general, is not granted.

In Section \ref{sec:3tensor} we discuss how a $3$-tensor can be interpreted as the generator of a matrix linear code.
Studying the code generated by the tensor $T$ will give us a necessary and sufficient condition to establish if, for a given tensor $T$, its associated $T$-product is invertible or not.
This connection, in a similar context, was explored in \cite[Theorem 3]{Cruz2016} and previously in \cite[Theorem 4.4.1]{Knuth1965}. 
\begin{proposition}\label{prop:invertible_product}
    For a given $3$-tensor $T \in \Fq^{m \times m \times m}$ the triple $(\Fq^m, + , \Tdot)$ is an $Fq$-linear presemifield iff $\{\Slice{T}{2}{i} \in \Fq^{m \times m} \}$ is a basis of an MRD code of dimension $m$.
    Equivalently iff 
    $$
        \Rank(\Slice{T}{2}{\bb}) = m, \; \forall \bb \in \Fq^m \setminus{\{\mathbf{0}\}}.
    $$
\end{proposition}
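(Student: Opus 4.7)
The plan is to strip away the routine parts of the presemifield axiom list, identify right-invertibility as a matrix-rank statement, and then match that with the MRD characterisation of the slices $T_{*,i,*}$. First, for every $T\in\Fqmmm$ the abelian group axiom holds automatically, and bilinearity of $\Tdot$ follows at once from the trilinear contraction $(\ba,\bb)\mapsto T_{\ba,\bb,*}$ in Definition \ref{def:Tproduct}. Thus the entire content of the proposition is that the third (invertibility) axiom of Definition \ref{def:FqPresemifield} is equivalent to each of the two claimed conditions.

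Next I would read off right-multiplication as an ordinary matrix action: by Definition \ref{def:Tproduct}, $\R_\bb(\bx)=\bx\Tdot\bb=\bx\,T_{*,\bb,*}$. Hence, for each fixed $\bb\neq\bzero$, $\R_\bb$ is the $\Fq$-linear endomorphism of $\Fq^m$ whose matrix (acting on row vectors) is $T_{*,\bb,*}$, and since $\Fq^m$ is finite-dimensional, $\R_\bb$ is bijective iff it is injective iff $\Rank(T_{*,\bb,*})=m$. This already establishes the equivalence between the presemifield property and the stated rank condition.

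Finally I would connect this to the MRD statement. By bilinearity of slicing, $T_{*,\bb,*}=\sum_{i=1}^m b_i\, T_{*,i,*}$, so the $\Fq$-span of $\{T_{*,i,*}\}_{i=1}^m$ is exactly $\{T_{*,\bb,*}\mid\bb\in\Fq^m\}$. If every nonzero element of this span has rank $m$, then the $m$ slices must be $\Fq$-linearly independent (a nontrivial dependence would produce a rank-$0$ element), so they form a basis of a code $\calC\subseteq\Fqmm$ of $\Fq$-dimension $m$ with minimum rank distance $m$; the Singleton-like bound $|\calC|\le q^{m(m-d+1)}$ is then saturated, so $\calC$ is MRD. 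Conversely, an MRD code in $\Fqmm$ of $\Fq$-dimension $m$ has minimum distance exactly $m$ by the same bound, whence every nonzero $T_{*,\bb,*}$ has rank $m$. No individual step is deep; the only mild point requiring attention is that the rank condition on every nonzero $T_{*,\bb,*}$ automatically forces the $\Fq$-linear independence of the slices, which is precisely what legitimises calling them a \emph{basis} of the MRD code.
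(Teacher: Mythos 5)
Your proof is correct and is essentially the argument this proposition rests on: the paper itself omits the proof (deferring to Knuth's Theorem 4.4.1 and De la Cruz et al.), and the intended reasoning is exactly yours --- the group and bilinearity axioms are automatic, $\R_{\bb}$ is the linear map with matrix $\Slice{T}{2}{\bb}$ so right-invertibility is the full-rank condition, and full rank of every nonzero element of $\Span{\Fq}{\Slice{T}{2}{1},\dots,\Slice{T}{2}{m}}$ is precisely the statement that these slices span (and, as you rightly note, are forced to be a basis of) a code in $\Fqmm$ of dimension $m$ and minimum distance $m$, i.e.\ an MRD code by the Singleton-like bound. No gaps.
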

Thanks to this result all known $\Fq$-linear MRD codes in $\Fqmm$ of dimension $m$ can be used to generate invertible $T$-products.
In addition, we report an important result related to the construction of the presemifield given in Proposition \ref{prop:invertible_product}.
In particular the following result gives a way to construct a new presemifield manipulating the $3$-tensor defining the product of another presemifield. 
\begin{proposition}\label{prop:InvertibleT}\cite[Theorem 4.3.1]{Knuth1965 }
Let $T \in \Fqmmm$ be a $3$-tensor that gives rise to an $\Fq$-linear presemifield. If $U$ is a tensor defined as $u_{i,j,k} = t_{\sigma(i,j,k)}$ where $\sigma$ is a permutation of $(i,j,k)$ then $U$ defines another $\Fq$-linear presemifield.
\end{proposition}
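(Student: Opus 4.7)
The plan is to reformulate the presemifield condition on $T$ as a symmetric non-degeneracy statement about the trilinear form
$$
B_T(\ba,\bb,\bc) \;:=\; \sum_{i,j,k} t_{i,j,k}\, a_i b_j c_k \;=\; (\ba \Tdot \bb)\, \bc^{\intercal},
$$
and then observe that permuting the three axes of the tensor merely permutes the three arguments of $B_T$, leaving the condition invariant.

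First I would translate Proposition \ref{prop:invertible_product} into the language of $B_T$. The condition $\Rank(T_{*,\bb,*}) = m$ for all nonzero $\bb$ is equivalent to saying that there is no nonzero $\ba$ with $\ba\, T_{*,\bb,*} = \mathbf{0}$, i.e.\ no pair of nonzero $\ba,\bb$ with $T_{\ba,\bb,*} = \mathbf 0$, which in turn reads: for all nonzero $\ba,\bb \in \Fq^m$ there exists $\bc \in \Fq^m$ with $B_T(\ba,\bb,\bc) \neq 0$. Call this property $(P_3)$; the analogous properties $(P_2)$ and $(P_1)$ (concerning pairs fixed in the first and third, respectively the second and third, axes) are defined in the same way.

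The heart of the proof is to show that $(P_1), (P_2), (P_3)$ are equivalent under the hypothesis that $T$ defines a presemifield. I would prove, say, $(P_3) \Rightarrow (P_2)$ as follows. Assume $(P_3)$, which by Proposition \ref{prop:invertible_product} means $T$ gives a presemifield; hence for every nonzero $\ba$ the left multiplication $L_\ba \colon \bb \mapsto \ba \Tdot \bb$ is a bijection on $\Fq^m$. Suppose $\ba \neq \mathbf 0$ and $\bc \neq \mathbf 0$ were such that $B_T(\ba, \bb, \bc) = 0$ for all $\bb$. Since $B_T(\ba,\bb,\bc) = L_\ba(\bb)\bc^{\intercal}$ and $L_\ba$ is surjective, this would force $\bv \bc^{\intercal} = 0$ for every $\bv \in \Fq^m$, hence $\bc = \mathbf 0$, a contradiction. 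The implications $(P_3) \Rightarrow (P_1)$ and the converses follow by swapping the roles of $L_\ba$ and $R_\bb$, using that each is bijective under the presemifield hypothesis.

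With this symmetric characterization in place, the conclusion is formal. Write $B_U(\bx_1,\bx_2,\bx_3) = \sum u_{i,j,k} x_{1,i}x_{2,j}x_{3,k} = B_T(\bx_{\sigma(1)}, \bx_{\sigma(2)}, \bx_{\sigma(3)})$. Since the conjunction of $(P_1),(P_2),(P_3)$ is symmetric in the three arguments, $B_T$ satisfies it iff $B_U$ does. In particular $B_U$ satisfies $(P_3)$, which is precisely $\Rank(U_{*,\bb,*}) = m$ for all $\bb \neq \mathbf 0$, so by Proposition \ref{prop:invertible_product} the tensor $U$ defines an $\Fq$-linear presemifield.

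The main obstacle is the symmetry step: it is not formal from Definition \ref{def:FqPresemifield}, because the definition only mentions left/right multiplications corresponding to the first two axes. The bijectivity of these multiplications is exactly what lets one upgrade the two-sided cancellation law into the fully symmetric non-degeneracy of the trilinear form, after which permutation invariance is automatic.
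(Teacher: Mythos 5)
Your proof is correct. Note, however, that the paper does not actually prove this proposition --- it is imported verbatim from Knuth (Theorem 4.3.1 of \cite{Knuth1965}) by citation --- so your argument is a self-contained replacement rather than a parallel to anything in the text. It is essentially the classical argument: Knuth's notion of a ``non-singular'' $3$-cube is exactly the statement that the trilinear form $B_T(\ba,\bb,\bc)=(\ba\Tdot\bb)\,\bc^{\intercal}$ is non-degenerate in each of the three slots once the other two are fixed at nonzero values, and this conjunction is manifestly invariant under permuting the three indices. The one step that genuinely needs an argument --- upgrading the single condition of Proposition \ref{prop:invertible_product} (your $(P_3)$) to the full symmetric non-degeneracy $(P_1)\wedge(P_2)\wedge(P_3)$ --- you handle correctly: $(P_3)$ is the two-sided zero-divisor-freeness, which over a finite set makes both $L_{\ba}$ and $\R_{\bb}$ bijective, and surjectivity of $L_{\ba}$ (resp.\ $\R_{\bb}$) immediately forces $(P_2)$ (resp.\ $(P_1)$). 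The converses you mention are not needed for the conclusion, and indeed you do not use them. Your approach also subsumes, as the identity $\ba\Tdot\bb=\ba\Slice{T}{2}{\bb}=\Slice{T}{1}{\ba}\bb^{\intercal}$ in two of the three slots, the rank-transfer computation that the authors use later to deduce the corollary that $\calT_1,\calT_2,\calT_3$ are all MRD; so your lemma on the equivalence of $(P_1),(P_2),(P_3)$ would in fact yield that corollary directly as well.
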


Using this result it is possible to use Proposition \ref{prop:invertible_product}, to construct a new $[m\times m,m]$ MRD code manipulating a known $[m\times m,m]$ MRD code.
\begin{corollary}
Let $T \in \Fqmmm$ be a $3$-tensor that gives rise to an $\Fq$-linear presemifield then the codes: 
\begin{align*}
    \calT_1 = \Span{\Fq}{\Slice{T}{1}{i} \mid i \in [m]}, 
    \calT_2 = \Span{\Fq}{\Slice{T}{2}{i} \mid i \in [m]}, 
    \calT_3 = \Span{\Fq}{\Slice{T}{3}{i} \mid i \in [m]}
\end{align*}
are all MRD codes in $\Fqmmm$ of dimension $m$ and minimum distance $m$.
\end{corollary}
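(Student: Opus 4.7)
The plan is to reduce everything to Proposition \ref{prop:invertible_product}, which already tells us that $\calT_2$ is an MRD code of $\Fq$-dimension $m$. For $\calT_1$ and $\calT_3$, I would invoke the index-permutation result of Knuth (Proposition \ref{prop:InvertibleT}) to replace $T$ by suitably permuted tensors whose \emph{axis-$2$} slices coincide, as matrices, with the axis-$1$ and axis-$3$ slices of $T$.

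\smallskip

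First, I would handle $\calT_1$. Define the tensor $U \in \Fqmmm$ by swapping the first two indices, $u_{i,j,k} := t_{j,i,k}$. A direct unrolling of Definition \ref{def:directional_mult} gives, for every $\ba \in \Fq^m$ and all $r,s \in [m]$,
\[
(U_{*,\ba,*})_{r,s} \;=\; \sum_{j=1}^m a_j\, u_{r,j,s} \;=\; \sum_{j=1}^m a_j\, t_{j,r,s} \;=\; (T_{\ba,*,*})_{r,s},
\]
so $U_{*,\ba,*} = T_{\ba,*,*}$ as matrices. Consequently $\calT_1 = \Span{\Fq}{U_{*,i,*} \mid i \in [m]}$. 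By Proposition \ref{prop:InvertibleT}, $U$ also gives rise to an $\Fq$-linear presemifield, so applying Proposition \ref{prop:invertible_product} to $U$ yields that the code generated by its axis-$2$ slices, which is precisely $\calT_1$, is an MRD code in $\Fqmm$ of dimension $m$.

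\smallskip

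For $\calT_3$, I would apply the same idea with the permutation that swaps the last two indices: set $V_{i,j,k} := t_{i,k,j}$. Then for every $\bc \in \Fq^m$ and all $r,s \in [m]$,
\[
(V_{*,\bc,*})_{r,s} \;=\; \sum_{j=1}^m c_j\, v_{r,j,s} \;=\; \sum_{j=1}^m c_j\, t_{r,s,j} \;=\; (T_{*,*,\bc})_{r,s},
\]
so $\calT_3$ equals the axis-$2$ slice code of $V$. Again Proposition \ref{prop:InvertibleT} ensures $V$ defines a presemifield, and Proposition \ref{prop:invertible_product} applied to $V$ shows $\calT_3$ is MRD of dimension $m$.

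\smallskip

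Finally, the claim about minimum distance is automatic: for an $[m\times m,\, m]_q$ code attaining the Singleton-like bound $|\calC| \le q^{m(n-d+1)}$ with $n=m$ and $|\calC|=q^m$, one gets $d = m$, so all three codes have minimum rank distance $m$. The only delicate point in the whole argument is bookkeeping of the index permutations when translating a slice of the permuted tensor back into a slice of $T$; Proposition \ref{prop:InvertibleT} and Proposition \ref{prop:invertible_product} then do the rest of the work.
\qed
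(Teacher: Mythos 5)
Your proposal is correct and follows essentially the same route as the paper: establish $\calT_2$ via Proposition \ref{prop:invertible_product}, then obtain $\calT_1$ and $\calT_3$ by applying Knuth's index-permutation result (Proposition \ref{prop:InvertibleT}) to reduce each to the axis-$2$ slice code of a permuted tensor. The only difference is presentational --- you make explicit the index bookkeeping and the derivation of $d=m$ from the Singleton-like bound, which the paper leaves implicit.
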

\begin{proof}
    From Proposition \ref{prop:invertible_product} we already have that $\calT_2$ is an MRD code of dimension $m$ and minimum distance $m$.

    Applying Proposition \ref{prop:InvertibleT} let $U = T^{\sigma}$ the tensor obtained applying the permutation $\sigma$ to the indexes of $T$ a tensor that defines another presemifield. Let $\calU_2 = \Span{\Fq}{\Slice{U}{2}{i} \mid i \in [m]}$, for $\sigma = (1,3,2)$ we get $\calU_2 = \calT_1$ and for $\sigma = (2,3)$ we get $\calU_2 = \calT_3$.

    From Proposition \ref{prop:InvertibleT} the tensor $U$ defines an $\Fq$-linear presemifield, applying Proposition \ref{prop:invertible_product} we get the desired result. \qed
\end{proof}

Using the $T$-product over $\Fqm$ we can define an inner product over $\Fqmn$.
\begin{definition}[$T$-inner product]\label{def:Tinner}
For two matrices $A,B \in \Fq^{m \times n}$, we denote by $\ba_j$ the $j$-th column of $A$ and $\bb_j$ the $j$-th column of $B$.
The \textbf{$T$-inner product} of $A$ and $B$ is defined as 
$$
    A \Tdot B = \sum_{j \in [n]} \ba_j^\intercal \Tdot \bb_j^\intercal \in \Fq^m.
$$
The vector $A \Tdot B \in \Fq^m$ can be rewritten using the trace function as 
$$
    (A \Tdot B)_k =
    \Tr(A^\intercal \Slice{T}{3}{k} B)=
    \Tr(\Slice{T}{3}{k} B A^\intercal),
$$
where $\Tr(M)$ denotes the the sum of the diagonal entries in $M$.
\end{definition}

To see the equivalence between the two formulas in Definition \ref{def:Tinner}, consider $C = A^\intercal \Slice{T}{3}{k} B$, in particular
\begin{align*}
     c_{j,j} &=  \sum_{i=1}^m A^\intercal[j,i] (\Slice{T}{3}{k} B)[i,j] 
      = \sum_{i=1}^m A^\intercal[j,i] \sum_{l=1}^m (\Slice{T}{3}{k})[i,l] B[l,j] \\ 
    & = \sum_{i=1}^m \sum_{l=1}^m a_{i,j} t_{i,l,k} b_{l,j} = T_{\ba_j^\intercal, \bb_j^\intercal, k}.
\end{align*}
Substituting in the second formula we have that 
$$(A \Tdot B)_k = \sum_{j \in [n]} c_{j,j} = \sum_{j \in [n]} T_{\ba_j^\intercal, \bb_j^\intercal, k} = \sum_{j \in [n]} \ba_j^\intercal \Tdot \bb_j^\intercal.$$

It is worth noting that the $T$-inner product over $\Fqmn$ can be converted to the standard inner product over $\Fqm^n$.
When we fix a basis $B$ of $\Fqm$ over $\Fq$, we can show that the standard inner product between two vectors $\bx,\by \in \Fqm^n$ can be represented by the $T$-inner product between the corresponding two matrices $X_B, Y_B$ induced by a particular choice of $T$.
Recall that the trace operator $\Tr_{q^m/q}: \Fqm \rightarrow \Fq$ is defined as 
$
    \Tr_{q^m/q}(a) = a + a^q + \cdots + a^{q^{m-1}} = \sum_{i = 0}^{m-1} a^{q^{i}}.
$
Given a basis $B = (\alpha_1, \ldots, \alpha_m)$ of $\Fqm$ over $\Fq$ we have that the matrix $\Delta_B$ defined as $(\Delta_B)_{i,j} = \Tr_{q^m/q}(\alpha_i \alpha_j)$ is always invertible (See \cite[Theorem 2.37]{Lidl:1997}).
Notice that, for $x  = \sum_{i=1}^m x_i \alpha_i$, we have
$$
\Tr_{q^m/q}(\alpha_j x) = \sum_{i=1}^m x_i \Tr_{q^m/q}(\alpha_i \alpha_j).
$$ 
Therefore $(\Tr_{q^m/q}(\alpha_1 x), \ldots, \Tr_{q^m/q}(\alpha_m x)) = \phi_B(x) \Delta_B$ and, since $\Delta_B$ is invertible we also have $\phi_B(x) = (\Tr_{q^m/q}(\alpha_1 x), \ldots, \Tr_{q^m/q}(\alpha_m x)) \Delta_B^{-1}$.
The following lemma shows the connection between the inner product over $\Fqm^n$ and the new inner product over $\Fqmn$ associated with 3-tensors.
\begin{lemma}
    Let $\bx, \by \in \Fqm^n$ and $B = (1,\alpha, \ldots, \alpha^{m-1})$ be a basis of $\Fqm$ over $\Fq$, where $\alpha$ is a primitive element of $\Fqm$ having companion matrix $A_B$ and let $B' = (\Enum{\beta}{1}{m})$ its orthonormal basis.
    Let
    $T \in \Fq^{m \times m \times m}$ be the $3$-tensor defined as $\Slice{T}{3}{k} = (A_B^{k-1})^\intercal M^{-1}$, where $M$ is the change of basis $B M = B'$. If we define the $3$-tensor $U$ such that $ \Slice{U}{3}{l} = \sum_{k=1}^m T_{\Wildcard, \Wildcard, k} (\Delta_B^{-1})_{k,l}$
    we have 
    $$
        \phi_B(\bx \by^\intercal) = \phi_B(\bx) \cdot_{U} \phi_B(\by).
    $$
    Notice that, for $n = 1,$ this implies that $(\Fqm, +, \cdot) \cong (\Fq^m, +, \cdot_U)$.
\end{lemma}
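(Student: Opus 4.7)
The plan is to reduce to the scalar case $n=1$ and then identify both sides by going through the trace-dual basis $B'$. By $\Fq$-bilinearity,
$$
\phi_B(\bx\by^\intercal) = \sum_{j=1}^n \phi_B(x_j y_j), \qquad \phi_B(\bx)\Udot\phi_B(\by) = \sum_{j=1}^n \phi_B(x_j)\Udot\phi_B(y_j),
$$
the latter using Definition~\ref{def:Tinner}, so it suffices to establish $\phi_B(xy) = \phi_B(x)\Udot\phi_B(y)$ for arbitrary $x,y\in \Fqm$.

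For the scalar identity I would assemble three ingredients. First, because $B'$ is the trace-dual of $B$, the $B'$-coordinates of any $z\in \Fqm$ are $\phi_{B'}(z)_k=\Tr_{q^m/q}(\alpha^{k-1}z)$. Second, the symmetric $\Fq$-bilinear form $(a,b)\mapsto \Tr_{q^m/q}(ab)$ is represented in $B$ by the Gram matrix $\Delta_B$, giving $\Tr_{q^m/q}(ab)=\phi_B(a)^\intercal\Delta_B\phi_B(b)$. Third, the change-of-basis matrix satisfies $M=\Delta_B^{-1}$: from $\delta_{i,j}=\Tr_{q^m/q}(\alpha_i\beta_j)=\sum_l m_{l,j}\Tr_{q^m/q}(\alpha_i\alpha_l)=(\Delta_B M)_{i,j}$. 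Combined with the companion-matrix identity $\phi_B(\alpha^{k-1}x)=A_B^{k-1}\phi_B(x)$, these give
\begin{align*}
\phi_{B'}(xy)_k &= \Tr_{q^m/q}\bigl((\alpha^{k-1}x)\,y\bigr)=\phi_B(\alpha^{k-1}x)^\intercal\Delta_B\phi_B(y)\\
&= \phi_B(x)^\intercal(A_B^{k-1})^\intercal M^{-1}\phi_B(y)=\phi_B(x)^\intercal \Slice{T}{3}{k}\phi_B(y),
\end{align*}
which is exactly the $k$-th entry of $\phi_B(x)\Tdot\phi_B(y)$. In other words, $T$ realises field multiplication but with the output expressed in $B'$-coordinates: $\phi_B(x)\Tdot\phi_B(y)=\phi_{B'}(xy)$.

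To finish, I would translate this back to $B$-coordinates via $\phi_B(z)=M\phi_{B'}(z)=\Delta_B^{-1}\phi_{B'}(z)$. Using the symmetry of $\Delta_B$ and the definition $\Slice{U}{3}{l}=\sum_k \Slice{T}{3}{k}(\Delta_B^{-1})_{k,l}$, a direct computation then yields
$$
(\phi_B(x)\Udot\phi_B(y))_l = \sum_k(\Delta_B^{-1})_{k,l}\,(\phi_B(x)\Tdot\phi_B(y))_k=\bigl(\Delta_B^{-1}\phi_{B'}(xy)\bigr)_l=\phi_B(xy)_l.
$$
The only nonroutine point is the matrix identity $M^{-1}=\Delta_B$, which says that the change-of-basis between $B$ and its trace-dual equals the inverse Gram matrix of $B$ with respect to the trace form; once this and the companion-matrix convention are fixed, the remainder is bookkeeping of transposes and bilinearity.
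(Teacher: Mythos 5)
Your proof is correct and follows essentially the same route as the paper: both arguments compute $\Tr_{q^m/q}(\alpha^{k-1}xy)$, identify it with the $k$-th component of the $T$-product (i.e.\ $\phi_B(x)\Tdot\phi_B(y)=\phi_{B'}(xy)$), and then convert back to $B$-coordinates via $\Delta_B^{-1}$, which is exactly what passing from $T$ to $U$ accomplishes. The only cosmetic differences are that you reduce to the scalar case $n=1$ first and prove the identity $M=\Delta_B^{-1}$ explicitly, where the paper works with the matrix trace form directly and cites Lidl--Niederreiter for the dual-basis conversion.
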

\begin{proof}
    Consider $X_{B'} = \phi_{B'}(\bx)$ and $Y_B = \phi_B(\by)$, we have $B' X_{B'} = \bx$ and $B Y_B= \by$, moreover $B' M^{-1} M X_{B'} = B M X_{B'}= \bx$ therefore $X_B = M X_{B'}$. The trace of the inner product $\bx \by^\intercal = \sum_{i = 1}^n x_i y_i$ can be computed as
    $$
        \Tr_{q^m/q}(\bx \by^\intercal) = 
        \sum_{i= 1}^n \sum_{j=1}^{m} \sum_{k=1}^{m} x_{i,j} y_{i,k} \Tr_{q^m/q}(\alpha^{k-1} \beta_{j}) 
        = \sum_{i=1}^n \sum_{j=1}^{m} x_{i,j} y_{i,j}, 
    $$
    notice that the last term is equal to $\Tr(X_{B'} Y_B^\intercal)$.
    
    Similarly we can compute $\Tr_{q^m/q}(\alpha^t \bx \by^\intercal)$.
    Notice that $\phi_B(\alpha^t \by) = A_B^t Y_B$, then $\Tr_{q^m/q}(\alpha^t \bx \by^\intercal) = \Tr(X_{B'} (A_B^t Y_B)^\intercal).$ 
    Writing everything in the basis $B$ and using the property of the trace it can be rewritten as $\Tr(Y_B^\intercal (A_B^t)^\intercal M^{-1} X_B)$.
    Consider the $3$-tensor $T$ given by $\Slice{T}{3}{k} = (A_B^{k-1})^\intercal M^{-1}$, from Definition \ref{def:Tinner} we have $$(X_B \Tdot Y_B)_k = \Tr(Y_B^\intercal (A_B^k)^\intercal M^{-1} X_B).$$
    Finally we can apply Theorem 2.3.7 \cite{Lidl:1997} obtaining
    $$
        \phi_B(\bx \by^\intercal) = (\phi_B(\bx) \Tdot \phi_B(\by)) \Delta_B^{-1} = \phi_B(\bx) \cdot_U \phi_B(\by).
    $$
    \qed
\end{proof}
\section{Generalized LRPC codes}
In this section we will introduce the generalized LRPC matrix codes over $\Fq$ and discuss some of their properties. Below we start by discussing how we expand $\Fq$-linear rank metric codes with tensors in $\Fqmmm$.

Suppose $\bg_1,\ldots,\bg_k \in \Fqm^n$ are $\Fq$-linearly independent vectors. 
The $\Fq$-linear code $\calC = \Span{\Fq}{\Enum{\bg}{1}{k}}$ of $\Fq$-dimension $k$ can be expanded by considering all the $\Fqm$-linear combinations of $\bg_1,\ldots,\bg_k$.
In this way we obtain the $\Fqm$-linear code $\calC_{\Fqm} = \Span{\Fqm}{\Enum{\bg}{1}{k}}$. 
The code $\calC_{\Fqm}$ will have $\Fq$-dimension at most $mk$.
Consider $\alpha \in \Fqm$ a primitive element of $\Fqm,$ then the expansion we just considered can be rewritten as
$$
    \calC_{\Fqm} = \Span{\Fq}{\alpha^{i-1} \bg_j}, \quad i \in [m], j \in [k].
$$
The relation between $\calC$ and $\calC_{\Fqm}$ can be expressed as 
$$
    \calC_{\Fqm} = \calC + \alpha \calC + \cdots + \alpha^{m-1} \calC,
$$
where $\alpha^i \calC = \{ \alpha^i \bc \mid \bc \in \calC\}$. 
The function that maps $x \in \Fqm$ to $\alpha x$ is $\Fq$-linear.
Once we fix a basis $B$  of $\Fqm$ over $\Fq$, we can express the multiplication by $\alpha$ as the multiplication by the matrix $A_B$, which is defined by $A_B \phi_B(x) = \phi_B(\alpha x)$ for all $x \in \Fqm$.

The code $\calC$ has a matrix representation 
$$\phi_B(\calC) = \Span{\Fq}{\phi_B(\bg_j) \mid j \in [k]}$$ while the code $\calC_{\Fqm}$ has a matrix representation given by 
$$
    \phi_B(\calC_{\Fqm}) = \Span{\Fq}{\phi_B(\alpha^{i-1} \bg_j)} = \Span{\Fq}{A_B^{i-1}\phi_B(\bg_j) \mid (i,j) \in [m] \times [k]}
$$
where $A_B^0 = I_m$.
The relation between these two matrix representations can be expressed as
$$
    \phi_B(\calC_{\Fqm}) = \phi_B(\calC) + A_B \phi_B(\calC) + \cdots + A_B^{m-1} \phi_B(\calC),
$$
where $A_B^i \phi_B(\calC) = \{ A_B^i C \mid C \in \phi_B(\calC) \}.$

The matrices $A_B^i$, $i=0,\dots, m-1$, in the above expansion can be seen as the slices of a tensor in $\Fqmmm$.
Similarly we can expand an $\Fq$-linear linear matrix code from dimension $k$ to a higher dimension at most $mk$ by a generic tensor $T$.
\begin{definition}[$T$-expansion]
    Let $\calC = \Span{\Fq}{G_j \mid j \in [k]} \subseteq \Fq^{m \times n}$ be a matrix code of dimension $k$, and
     let $T \in \Fq^{m \times m \times m}$ be a $3$-tensor.
    The \textbf{$T$-expansion} code of $\calC$ is given by
    $$
        \calC_T = \Span{\Fq}{\Slice{T}{3}{i}G_j \mid (i,j) \in [m] \times [k]}.
    $$
    The dimension of $\calC_T$ will be at most $km$.
\end{definition}

\subsection{Main Construction}
We are now ready to introduce the main construction of this paper. 
\begin{definition}[Generalized LRPC codes]\label{def:GeneralizedLRPC}
    Let $T$ be a $3$-tensor in $\Fq^{m \times m \times m}$ and let $H_1, \ldots, H_{n-k}$ be $n-k$ linearly independent matrices in $\Fq^{m \times n}$.
    Let $\calB \subseteq \Fq^m$ be a vector subspace such that $\Colsp(H_i) \subseteq \calB, \forall i \in [n-k]$ and $\dim(\calB) = d < m$. 
    Consider the code $\calH = \Span{\Fq}{\Enum{H}{1}{n-k}}$ and its $T$-expansion $\calH_T = \Span{\Fq}{\Slice{T}{3}{i} H_j \mid (i,j) \in [m] \times [n-k]}$.
    The code $\calC = (\calH_T)^\perp$ is said to be a \textbf{generalized LRPC code}, which can be explicitly expressed as 
    $$
        \calC = 
        \{C \in \Fq^{m \times n} \mid 
        \Tr(\Slice{T}{3}{i}H_j C^\intercal) = 0, \forall (i,j) \in [m] \times [n-k] \}.
    $$
\end{definition}
As a first example of generalized LRPC codes, we can show that classical LRPC codes are a particular case of generalized LRPC codes.
We indicate with $\bh_1, \ldots, \bh_{n-k}$ the rows of the matrix $H \in \Fqm^{(n-k) \times n}$ such that $\dim(\Span{\Fq}{H}) = d < m$.
The matrix $H$ is the parity check matrix of an LRPC code $\calC$ in vector form.

We want to show that the matrix form of the code $\calC$ is isomorphic to a generalized LRPC code.
To this end, we need to choose a proper basis of $\Fqm$ over $\Fq$ and a proper $3$-tensor $T$.
Let $\alpha$ be a primitive element of $\Fqm$ and let $B = \{1, \alpha, \ldots, \alpha^{m-1}\}$ be a basis of $\Fqm$ over $\Fq.$ Take $H_i=\phi_B(\bh_i)$ for $i=1,2,\dots, n-k$.
The code $\calH = \Span{\Fq}{\Enum{H}{1}{n-k}}$ is an $\Fq$-linear subspace of dimension $n-k$ of $\phi_B(\calC^\perp)$.
With an abuse of notation, we indicate $\phi_B(\Span{\Fq}{H}) = \Span{\Fq}{\phi_B(h_{i,j}) \mid (i,j) \in [n-k] \times [n]}$, notice that $\Colsp(H_i) \subseteq \phi_B(\Span{\Fq}{H})$ for all $i \in [n-k].$
Moreover, since $\dim(\Span{\Fq}{H}) = d < m,$ then also $\dim(\phi_B(\Span{\Fq}{H})) = d < m.$

To show that $\phi_B(\calC)$ is a generalized LRPC code, we need to find a proper $3$-tensor $T$ such that the expansion $\calH_T = \phi_B(\calC^\perp)$.
The code $\calC^\perp = \Span{\Fqm}{\bh_j \mid j \in [n-k]}$ is the $\Fqm$-span of the rows of $H$. We also have that $\calC^\perp = \Span{\Fq}{\alpha^{i-1} \bh_j \mid (i,j) \in [m] \times [n-k]}$.
As we have seen above the multiplication by $\alpha$ is a linear function. We denote by $A_B$ the matrix associated to this function over the basis $B$. 
It is now clear that 
$$
    \phi_B(\calC^\perp) = \Span{\Fq}{A_B^{i-1} H_j \mid (i,j) \in [m] \times [n-k]},
$$
which is  the expansion of the code $\calH$ using the $3$-tensor $T$ defined as $\Slice{T}{3}{i} = A_B^{i-1}$ for $i \in [m]$ where $A_B^0 = I_m$.
Notice that the collection $I_m, A_B, \ldots, A_B^{m-1}$ forms a basis of an MRD code. It follows from Proposition \ref{prop:InvertibleT} that the product induced by $T$ is invertible. 

So far we have shown that $\calH_T = \phi_B(\calC^\perp).$ According to Definition \ref{def:GeneralizedLRPC}, the code $(\calH_T)^\perp$ is a generalized LRPC code. 
We have that $(\calH_T)^\perp = \phi_B(\calC^\perp)^\perp$. Recalling Theorem \ref{gorla2017} \cite{gorla2017} we also have $\phi_B(\calC^\perp)^\perp = \phi_{B'}(\calC)$ from which we get the isomorphism 
$$
    (\calH_T)^\perp = \phi_{B'}(\calC) \cong \phi_B(\calC).
$$

\subsection{Relations between generalized LRPC codes}
To define a generalized LRPC code we need a code $\calH$ generated by the matrices $\Enum{H}{1}{n-k} \in \Fqmn$ having their column spaces contained in a small subspace $\calB \subseteq \Fq^m$ and a $3$-tensor $T$.
Here a natural question arises: does there exist certain relation between two generalized LRPC codes derived from different matrix codes and tensors? This subsection studies this problem when a matrix code is expanded by two isopotic tensors.

We first recall some observations on presemifields from Knuth's work \cite{Knuth1965} which are closely related to our discussion. 
Knuth  observed that all presemifields are connected to non-singular $3$-tensor $T$ where the product is defined roughly in the same way as we defined the $T$-product.
With the notation introduced in this paper, his definition of non-singular $3$-tensor corresponds to the property that any non-zero linear combination of slices $\Slice{T}{3}{k}$ is invertible (see Proposition \ref{prop:InvertibleT}).
Two presemifields $S,+, \Tdot$ and $S', +, \Udot$ are said to be \textbf{isotopic} if there are $3$ invertible linear maps $A,B,C$ from $S$ to $S'$ such that
$$
    (x \Tdot y) C = (xA)\Udot(yB)
$$
for all $x,y \in S$.

Each $3$-tensor $T \in \Fqmmm$ can be associated to a $3$-linear map  $\Fq^m \times \Fq^m \times \Fq^m \rightarrow \Fq$ and each $3$-linear map can be interpreted as a $3$-tensor.
If we consider the standard basis $\be_1, \ldots, \be_m$ of $\Fq^m$ we have that $T(\be_i, \be_j, \be_k) = t_{i,j,k}$.

Let $U(\bx,\by,\bz) = T(\bx A, \by B, \bz C)$ where $A,B,C \in \Fqmn$ are invertible linear maps.
We have the following relation between the entries of $U$ and $T$.
\begin{equation}\label{eq:UeqT}
    u_{i,j,k} = \sum_{u = 1}^m \sum_{v = 1}^m \sum_{w = 1}^m a_{i,u} b_{j,v} c_{k,w} t_{u,v,w}.
\end{equation}
We shall show the relation between $\bx \Tdot \by $ and $\bx \Udot \by$ is given by
\begin{equation}\label{eq:UdotTdot}
    (\bx \Udot \by) (C^{\intercal})^{-1}  = (\bx A) \Tdot (\by B),
\end{equation}
which indicates that the tensors $U$ and $T$ are isopotic.
This relation is immediate when $C = I_m$. To understand what happen when $C \neq I_m$ and to avoid messy notation consider the simpler case when $U(\bx,\by,\bz) = T(\bx,\by,\bz C)$.
We define $\bv = \bx \Tdot \by = \sum_{i,j,k=1}^m x_i y_j t_{i,j,k} \be_k$. Consider now $\bx \Udot \by = \sum_{i,j,k=1}^m x_i y_j u_{i,j,k} \be_k$, we can substitute $u_{i,j,k}$ using (\ref{eq:UeqT}) obtaining 
\begin{equation*}
\begin{split}
    \bx \Udot \by &= \sum_{k=1}^m \sum_{i= 1}^m \sum_{j=1}^m \sum_{w=1}^m c_{k,w} x_i y_j t_{i,j,w}
    = \sum_{w=1}^m \left(\sum_{i,j,k=1}^m x_i y_j t_{i,j,w}\right) c_{k,w}
    \\& =\sum_{w=1}^m \left(\sum_{i,j,k=1}^m  x_i y_j t_{i,j,k}\be_k\right) c_{k,w}= \bv C^\intercal = (\bx \Tdot \by)C^\intercal.
\end{split}
\end{equation*}

In the sequel we will analyze the relation between $\calC = \calH_T^\perp$ and $\calD= \calH_U^\perp$ for two isotopic $3$-tensors $U$ and $T$.
We start with the expansion codes $\calH_T = \Span{\Fq}{\Slice{T}{3}{j} H_i \mid (i,j) \in [n-k] \times [m]}$ and $\calH_U = \Span{\Fq}{\Slice{U}{3}{j} H_i \mid (i,j) \in [n-k] \times [m]}$.
We want to study the relation between these two codes when $U(\bx,\by,\bz) = T(\bx A,\by B,\bz C).$

In the simpler case where the relation between $U$ and $T$ is of the kind $U(\bx,\by,\bz) = T(\bx,\by,\bz C)$ equation (\ref{eq:UeqT}) becomes 
$$
    u_{i,j,k} = \sum_{w = 1}^m t_{i,j,w} c_{k,w}.
$$
This means that $\Slice{U}{3}{k} = \sum_{w = 1}^m \Slice{T}{3}{w} c_{k,w}$, basically we can express each slice $\Slice{U}{3}{k}$ as a linear combination of $\Slice{T}{3}{w}$.
Since $C$ is invertible by hypothesis, we have that 
$$
    \calU_3 = \Span{\Fq}{\Slice{U}{3}{k} \mid k \in [m]} = \Span{\Fq}{\Slice{T}{3}{k} \mid k \in [m]} = \calT_3.
$$
Therefore we have that $\calH_T = \calH_U$ which means that $\calC = \calD.$
Since any choice of invertible $C$ will give us the same code, in the general case we can always consider $C = I_m$.
It remains then to evaluate what happens when 
$U(\bx,\by,\bz) = T(\bx A,\by B,\bz)$.
In this case we have 
$
    \Slice{U}{3}{k} = A \Slice{T}{3}{k} B^\intercal,
$
then $\calH_U =  A(B^\intercal \calH)_T$. Since $A$ is invertible 
we can conclude that
\begin{equation}\label{eq:relation_GLRPC_codes}
\calH_U \cong  (B^\intercal \calH)_T  
\end{equation} for two tensors $U(\bx,\by,\bz) = T(\bx A,B\by B,\bz C)$.

Observe that $\calH \cong B^\intercal \calH$ for an invertible matrix $B$. It would be rather tempting to conclude from \eqref{eq:relation_GLRPC_codes} that $\calH_T  \cong (B^\intercal \calH)_T \cong \calH_U$. Nevertheless, the first congruence is generally false.
\begin{Remark}\label{rm:NonIsoExt}
    The extensions of two isomorphic codes using the same $3$-tensor $T \in \Fqmmm$ are in general not isomorphic.
    
    Consider as an example $H \in \Fq^{m \times n}$ a matrix such that all its columns belong to the kernel of $\Slice{T}{3}{k}$, then $\Slice{T}{3}{k} H = 0$ but we can build $H' = AH$ such that $A$ is an invertible matrix and such that $\Slice{T}{3}{k} AH \neq 0.$
    The codes $\calH = \Span{\Fq}{H}$ and $A \calH$ are isomorphic but $\calH_T$ will have dimension at most $m-1$ while $(A\calH)_T$ can have dimension up to $m$.
\end{Remark}

To conclude this section, there are some cases where the two isopotic tensors $U(\bx,\by,\bz) = T(\bx A,\by B,\bz C)$ for certain invertible matrices $B$, the codes $\calC = \calH_T^\perp$ and $\calD = \calH_U^\perp$ are isomorphic.
Here we provide two choices of those invertible matrices $B$.
If $B^\intercal$ commutes with all $\Slice{T}{3}{k}$, we have that $\Slice{U}{3}{k} = A B^\intercal \Slice{T}{3}{k}$ then $\calH_U = A B^\intercal \calH_T$, since $AB^\intercal$ is invertible we can conclude that $\calC \cong \calD$.
A second choice of $B$ that keeps the two codes isomorphic is when $B^\intercal \calH = \calH$ (A trivial case is for $B = I_m$), in this case we clearly have that $\calH_U \cong (B^\intercal \calH)_T = (I_m \calH)_T = \calH_T.$
Again this implies that $\calC \cong \calD.$

\ifdraft
\color{blue}
In Remark \ref{rm:NonIsoExt} we take advantage of a tensor that have some non invertible slices. What can we say when we exclude this problem (i.e. use two isotopic presemifield)?

What about extending two codes $\calH \cong \calH'$ using the same $3$-tensor $T$. Under which conditions $\calH_T \cong \calH'_T$?
\fi

\color{black}

\section{Decoding of generalized LRPC codes}
In this section we will discuss the decoding  of the generalized LRPC codes.
The decoding algorithm used for LRPC codes can be adapted to the generalized version under certain conditions.
Recall that the decoding algorithm of LRPC codes can be divided into two steps. 
The first step aims to recover the product space $\calH.\calE$ using the fact that $\Span{\Fq}{H\be^\intercal} \subseteq \calH.\calE$.
The second step aims to recover the error support $\calE$ from $\calH.\calE$ as 
$$
    \calE \subseteq \bigcap_{i \in [d]} h_i^{-1} \calH.\calE
$$
where $h_1,\ldots, h_d$ is a basis of the support of the parity check matrix $\calH$.
Notice that the invertibility of the standard product over $\Fqm$ is used just on the basis $h_1,h_2,\dots, h_d$ of $\calH$.


As for a generalized LRPC code $\calC \subseteq \Fq^{m \times n}$ of dimension $mk$,
from Definition \ref{def:GeneralizedLRPC} there exists a code $\calH = \Span{\Fq}{\Enum{H}{1}{n-k}}$, where $\Colsp(H_i) \subseteq \calB = \Span{\Fq}{\Enum{\beta}{1}{d}}, \forall i \in [n-k]$ and a $3$-tensor $T \in \Fq^{m \times m \times m}$ such that $(\calH_T)^\perp = \calC$. 
As we will see, the decoding of the generalized LRPC code $\calC$ 
involves the discussion of the invertibility of $\Slice{T}{2}{\bb}$ with respect to elements $\bb$ in the subspace $\calB$.

Below we first discuss a basic decoding approach for the code $\calC$ under a condition that
$\Slice{T}{2}{\bb}$ is invertible for certain $\bb\in \calB$, which
is directly motivated by the decoding of LRPC codes,
and then study an improved decoding approach without such a condition.

\subsection{Basic Decoding}\label{Sec:basic-decoding}
For a generalized LRPC code $\calC=(\calH_T)^\perp$ obtained from
a subspace $\calB$ and a $3$-tensor $T\in \Fq^{m \times m \times m}$ as above,
this subsection considers the decoding of $\calC$ when $\calB$ and $T$ satisfy the following condition:
\begin{cond}\label{Cond1} A subspace $\calB$ and a 3-tensor $T$ 
are said to be \textit{compatible} if
there exists a basis $(\bb_1, \ldots, \bb_d)$ of $\calB$ such that the matrix $\Slice{T}{2}{\bb_j}\in\Fq^{m \times m}$ is invertible for each element $\bb_j$.
\end{cond}

By Proposition \ref{prop:invertible_product} we see that when a $3$-tensor $T$ defines a structure of a presemifield over $\Fqm$, in other words, when
the matrix code $\calT_2$ is an MRD code of dimension $m$,
the matrix $\Slice{T}{2}{\bb}$ for any nonzero $\bb\in \calB$ is invertible. Hence any  MRD matrix code in $\Fqmm$ of $\Fq$-dimension $m$ satisfies Condition \ref{Cond1}. It is clear that given a subspace $\calB$, the set of tensors $T$ satisfying Condition \ref{Cond1} is significantly larger than the set of MRD codes in $\Fqmm$ with dimension $m$.


Now we discuss the decoding procedure. Suppose we receive the message $Y = C + E$ where $C \in \calC$ and $E \in \Fq^{m \times n}$ is a matrix of low rank $r$.
We can divide the decoding process into two steps. 
In the first step we will recover the column support of $E$.
Once the column support is known, we will be able to write $E = F X$ where $F \in \Fq^{m \times r}$ such that $\Colsp(F) = \Colsp(E) =\Span{\Fq}{\Enum{\bbf}{1}{r}}$ and $X \in \Fq^{r \times n}$ is a matrix of $nr$ unknowns.
In the second step we will solve a linear system which will fix these $nr$ unknowns.
\subsubsection{Step 1} For $C \in \calC$, from the definition we have that $C \Tdot H_i = \bzero, \forall i \in [n-k]$.
Therefore $ Y \Tdot H_i = (C + E) \Tdot H_i = E \Tdot H_i = \bs_i$.
Each column of $E$ belongs to a subspace $\calE = \Span{\Fq}{\Enum{\bbf}{1}{r}}$, while each column $\bh_{i,j}$ of $H_i$ belongs to $\calB = \Span{\Fq}{\Enum{\bb}{1}{d}}$.
We denote by $\be_j$ the $j$-th column of $E$ and $\bh_{i,j}$ the $j$-th column of $H_i$. Since the $T$-product is bilinear, from Theorem \ref{th:GeneralizedProductSpace} we have
\begin{equation} \label{eq:SyndromeProductSpace}
   \bs_i = E \Tdot H_i = \sum_{j \in [n]} \be_j \Tdot \bh_{i,j} \in \Span{\Fq}{\calE \Tdot \calB}, \forall i \in [n-k]
\end{equation}
where $\dim(\Span{\Fq}{\calE \Tdot \calB}) \le rd.$
Letting $S = (\Enum{\bs}{1}{n-k})$ we will have $\Colsp(S) \subseteq \Span{\Fq}{\calE \Tdot \calB}$, where the equality holds with a good probability if $(n-k) \ge rd$. 

To recover $\calE = \Colsp(E)$ we shall exploit Condition \ref{Cond1}, i.e.,
the existence of a basis $(\bb_1,\dots, \bb_d)$ of $\calB$ over which the $T$-product is invertible. 
The space $\Span{\Fq}{\calE \Tdot \calB}$ can be expressed as
$$
  \Span{\Fq}{\calE \Tdot \calB}  = \Span{\Fq}{\bbf_i \Tdot \bb_j} = \Span{\Fq}{ \bbf_i \Slice{T}{2}{\bb_j} \mid (i,j) \in [r] \times [d]}.
$$
Under Condition \ref{Cond1} we have $\calE \subseteq \Span{\Fq}{\calE \Tdot \calB} (\Slice{T}{2}{\bb_j})^{-1}$ for all $\bb_j$.
Exploiting this fact we have
\begin{equation}\label{eq:ErrorSpaceRecoveryIntersection}
    \calE \subseteq \bigcap_{j \in [d]} \Span{\Fq}{\calE \Tdot \calB} (\Slice{T}{2}{\bb_j})^{-1}.
\end{equation}
With a high probability the equality will hold and we will be able to recover $\calE$. We summarize the above process in Algorithm 1.

\begin{algorithm}[http]
		\SetAlgoLined
		\KwIn{A generalized LRPC code $\calC = \calH_T^\perp$ with a tensor $T\in \Fqmmm$ and $\calH = \Span{\Fq}{H_1,\ldots , H_{n-k}}$, where $\Colsp(H_i) \subseteq \calB = \Span{\Fq}{\bb_1,\ldots , \bb_d}$, $\forall \, i$.\\ \hspace{1cm}
			A matrix $Y = X + E \in \Fq^{m \times n}$ with $X \in \calC$ and $E \in \Fq^{m \times n}$  of rank $r$.
        }
        \KwOut{The support $\calE = \Colsp(E)$ of dimension $r$.}
        \tcp{Assumption: $\dim(\calE.\calB) = rd$}
        \tcp{Condition: $\Slice{T}{2}{\bb_j}$ is invertible for all $j \in [d]$}

        \tcp{Compute syndrome}
        $S = [ ]$\;
        \For{$i \in [n-k]$}
        {
            $S.\mathbf{append}(Y \Tdot H[i])$\;
        }
        \tcp{Compute syndrome space}
        $\calS = \Span{\Fq}{S}$\;
        \uIf{$\dim(\calS) == rd$}
        {
			\tcp{Compute intersection}
			$\calE = \calS (\Slice{T}{2}{\bb_1})^{-1}$\;
			\For{$\bb_i \in \{\bb_2 \ldots \bb_d \}$}
			{
				$\calE = \calE \cap \calS (\Slice{T}{2}{\bb_i})^{-1}$ \; 
			}
			Return $\calE$\;
		}
        \tcp{The dimension of $\calS$ is too low}
        \Else 
        { 
            Error Support Recovery Failure\;
        }
		\caption{Error support recovery of generalized LRPC codes}\label{Algorithm 1}
		\normalsize
	\end{algorithm}

\subsubsection{Step 2}
Assuming that the first step was successful, we obtained $\Enum{\bbf}{1}{r}$ which generates $\calE$.
We can collect them in a matrix $F = (\Enum{\bbf}{1}{r}) \in \Fq^{m \times r}$ and express the error as $E = F X$, where $X \in \Fq^{r \times n}.$
Consider $\bs_i = E \Tdot H_i$, from Definition \ref{def:Tinner} its $j$-th component $\bs_{i,j} = \Tr(\Slice{T}{3}{j} H_i E^\intercal) = \Tr(\Slice{T}{3}{j} H_i X^\intercal F^\intercal)$.
For each $\bs_{i,j}$ we get a linear equation in the $nr$ variables contained in $X$. In total we will have $(n-k)m$ such linear equations in $nr$ variables.
It turns out that these equations are not linearly independent.
As for classical LRPC codes, we can get at most $(n-k)rd$ linearly independent equations.
The space $\Span{\Fq}{\calE \Tdot \calB}$ is generated by the $rd$ vectors $\bz_{k,l} = \bbf_k \Tdot \bb_l = \bbf_k \Slice{T}{2}{\bb_l} \in \Fq^m$, let $Z = \{\bz_{k,l} \mid (k,l) \in [r] \times [d]\}$ denote this set of generators.
Each vector $\bs_i = E \Tdot H_i \in \Span{\Fq}{\calE \Tdot \calB}$ can be expressed as
\begin{equation}\label{eq:SystemExpansion1}
    \bs_i = \sum_{k=1}^r \sum_{l = 1}^d \eta_{i,k,l} \bz_{k,l},
\end{equation}
where $\eta_{i,k,l} \in \Fq$ are the coordinates of $\bs_i$ with respect to the set of generator $Z$.
Another expression of $\bs_i$ is given by the parity check equation
\begin{equation} \label{eq:SystemExpansion2}
    \bs_i = E \Tdot H_i = \sum_{j=1}^n \be_j \Tdot \bh_{i,j} = \sum_{j=1}^n \be_j \Slice{T}{2}{\bh_{i,j}},
\end{equation}
where $\be_j$ is the $j$-th column of $E$ and $\bh_{i,j}$ is the $j$-th column of the matrix $H_i$.
We have that $\be_j = \sum_{k=1}^r x_{k,j} \bbf_k$ and $\bh_{i,j} = \sum_{l=1}^d \mu_{i,j,l} \bb_l$, notice that $\Slice{T}{2}{\bh_{i,j}} = \sum_{l=1}^d \mu_{i,j,l} \Slice{T}{2}{\bb_l},$
substituting it in (\ref{eq:SystemExpansion2}) we obtain
\begin{equation}\label{eq:SystemExpansion3}
    \bs_i = \sum_{j=1}^n \sum_{k=1}^r \sum_{l=1}^d x_{k,j}\mu_{i,j,l} (\bbf_k \Slice{T}{2}{\bb_l})=
    \sum_{j=1}^n \sum_{k=1}^r \sum_{l=1}^d x_{k,j} \mu_{i,j,l} \bz_{k,l}.
\end{equation}
From (\ref{eq:SystemExpansion1}) and (\ref{eq:SystemExpansion3}) we get the system of $(n-k)rd$ equations
\begin{equation}\label{eq:SystemExpansion4}
    \sum_{j = 1}^n x_{k,j} \mu_{i,j,l} = \eta_{i,k,l}, \quad (i,k,l) \in [n-k] \times [r] \times [d].
\end{equation}
Finally, as in the case of classical LRPC codes, we have $nr$ unknowns and $(n-k)rd$ equations.
For $n \le (n-k)d$, if at least $nr$ of the equations in (\ref{eq:SystemExpansion4}) are linearly independent, then the system has a unique solution.
If the system (\ref{eq:SystemExpansion4}) has only $nr - a$ linearly independent equations the algorithm will give a list of $q^a$ possible solutions.

\subsection{Success probability}
Similarly to the classical LRPC codes the algorithm for decoding generalized LRPC codes is not deterministic.
In \textbf{Step 1} we have that $\Colsp(S) \subseteq \Span{\Fq}{\calE \Tdot \calB}$.
The space $\Span{\Fq}{\calE \Tdot \calB}$ has dimension upper-bounded by $rd$.
It could happen that, even if $n-k \ge rd$, the space $\Colsp(S)$ is strictly contained in $\Span{\Fq}{\calE \Tdot \calB}$.
Heuristically we can assume that the columns of $S$ are vectors uniformly sampled from $\Span{\Fq}{\calE \Tdot \calB}$.
Under this assumption, the probability that a set of size $n-k \ge rd$ whose elements are extracted uniformly form a space $\Span{\Fq}{\calE \Tdot \calB}$ of dimension $rd$ spans the whole  space $\Span{\Fq}{\calE \Tdot \calB}$ is given by \cite{gaborit2013} 
$$
    P(\Colsp(S) = \Span{\Fq}{\calE \Tdot \calB}) = 1 - q^{rd - (n-k)}.
$$
Notice that in the case $\dim(\Span{\Fq}{\calE \Tdot \calB}) = s < rd$ this probability improves to $1 - q^{s - (n-k)}$.
The assumption that $\dim(\Span{\Fq}{\calE \Tdot \calB}) = rd$ is a worst case scenario.

Similarly to the classical LRPC codes, the second reason of failure in \textbf{Step 1} is given by the probability that the intersection of $\Span{\Fq}{\calE \Tdot \calB} (\Slice{T}{2}{\bb_i})^{-1}$ is not equal to $\calE$.
This probability can be approximated by the probability that $d$ subspaces $\calR_1, \ldots, \calR_d \subseteq \Fq^m$ of dimension $rd$, each containing the same subspace $\calE$ of dimension $r$, intersect in something bigger than $\calE$.
Assuming $\calR_1, \ldots, \calR_d$ are independently randomly chosen, the probability of their intersection to be bigger than $\calE$ is given by $q^{-(d-1)(m-rd-r)}$ \cite{ROLLO}.
Considering these two possible reasons of failure, the success probability for \textbf{Step 1} will be lower bounded by $1 - (q^{rd - (n-k)} + q^{-(d-1)(m-rd-r)}).$
Notice that, in the case $\calE \subsetneq \bigcap_{i \in [d]} \Span{\Fq}{\calE \Tdot \calB} (\Slice{T}{2}{\bb_i})^{-1}$, it could still be possible to correct uniquely the error in some cases.
Suppose $r < r' = \dim\left(\bigcap_{i \in [d]} \Span{\Fq}{\calE \Tdot \calB}(\Slice{T}{2}{\bb_i})^{-1}\right)$, the linear system in \textbf{Step 2} will have $nr'$ unknowns and $(n-k)rd$ equations.
If $nr' \le (n-k)rd$ it will be still possible to uniquely recover the correct error.
\ifdraft
\color{blue}
Who ensures us that all the $(n-k)rd$ equations of the expanded system are l.i.? (most likely they will, if not, we will still have a rather small list of possible errors in most cases)
\color{black}
\fi

\subsection{Improved Decoding}

In Section \ref{Sec:basic-decoding}, we discussed the decoding of generalized LRPC codes when the subspace $\calB$ and the $3$-tensor $T$ are compatible, which follows a similar decoding procedure of LRPC codes.
As a matter of fact, 
generalized LRPC codes can still be efficiently decoded even
when $\calB$ and $T$ are not compatible.
Below we will investigate the decoding of a generalized LRPC code for such cases. This will allow us to randomly choose the subspace $\calB$ and the 3-tensor $T$. In order to keep approximately the same decoding success probability, we may pay a price that we shall slightly increase $m$ to $m+3$. 


Suppose a generalized LRPC code $\calC=(\calH_T)^\perp$ is obtained from
a subspace $\calB$ of dimension $d$ and a $3$-tensor $T\in \Fq^{m \times m \times m}$ as in Definition \ref{def:GeneralizedLRPC}, where
$\calB$ and $T$ are not compatible, and assume $E$ is an error matrix with column support $\calE = \Colsp(E)$ of low dimension $r$. In the process of error support recovery,
it is still true that $\bs_i = E \Tdot H_i \in \Span{\Fq}{\calE \Tdot \calB}$ of dimension upper bounded by $rd$.
Letting $S = (\Enum{\bs}{1}{n-k})$, we have that $\calS = \Colsp(S) \subseteq \Span{\Fq}{\calE \Tdot \calB}.$ 
If $(n-k) \ge rd$, then, with a good probability, $\calS = \Span{\Fq}{\calE \Tdot \calB}$. Our next task is to recover $\calE$ from the knowledge of $\calB$ and $\Span{\Fq}{\calE \Tdot \calB}$.
We know that if there exists an element $\bb \in \calB$ such that $\Slice{T}{2}{\bb}$ is invertible, then $\calE \subseteq \Span{\Fq}{\calE \Tdot \calB} (\Slice{T}{2}{\bb})^{-1}$. However, when the subspace $\calB$ and the $3$-tensor are not compatible, 
there might be not enough invertible elements in $\Slice{T}{2}{\calB}$. Consequently, we will fail to recover $\calE$ with the basic decoding described in Section \ref{Sec:basic-decoding}.

Before proceeding with the decoding, we need to introduce some results in linear algebra.
\begin{lemma}\label{lm:DimLinearInverseSpace}
        Let $T: \Fq^m \rightarrow \Fq^m$ be a linear map and let $\im(T), \ker(T) $ denote the image and the kernal of $T$, respectively.
    Given a subspace $\calA$ of $\Fq^m$, its preimage with respect to $T$
    has dimension
    $$
    \dim(T^{-1}(\calA)) = \dim(\ker(T)) +  \dim(\calA \cap \im(T)).
    $$
\end{lemma}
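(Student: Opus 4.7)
\medskip
\noindent\textbf{Proof proposal.} The plan is to apply the rank–nullity theorem to the restriction of $T$ to the subspace $T^{-1}(\calA)$. The whole argument is a short linear algebra calculation, so the main task is to identify the image and kernel of this restricted map correctly; once this is done, the dimension formula follows immediately.

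First, I would observe that $\ker(T)\subseteq T^{-1}(\calA)$, since $T$ sends every element of $\ker(T)$ to $\mathbf{0}\in\calA$. In particular $T^{-1}(\calA)$ is a (nonempty) $\Fq$-linear subspace of $\Fq^m$, because it is the preimage of a subspace under a linear map. Let $T' := T\big|_{T^{-1}(\calA)} : T^{-1}(\calA) \to \Fq^m$ denote the restriction.

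Next I would identify $\im(T')$ and $\ker(T')$. For the kernel, any $x \in T^{-1}(\calA)$ with $T(x)=\mathbf{0}$ lies in $\ker(T)$; conversely $\ker(T)\subseteq T^{-1}(\calA)$ as noted, hence $\ker(T')=\ker(T)$. For the image, by definition $\im(T')\subseteq \calA$, and trivially $\im(T')\subseteq \im(T)$, so $\im(T')\subseteq \calA\cap\im(T)$. For the reverse inclusion, pick any $a\in\calA\cap\im(T)$; choose $x\in\Fq^m$ with $T(x)=a$. Since $a\in\calA$, we have $x\in T^{-1}(\calA)$, so $a = T'(x)\in\im(T')$. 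Hence $\im(T')=\calA\cap\im(T)$.

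Finally, applying the rank–nullity theorem to $T'$ yields
$$
\dim\bigl(T^{-1}(\calA)\bigr) = \dim(\ker(T')) + \dim(\im(T')) = \dim(\ker(T)) + \dim(\calA\cap\im(T)),
$$
which is the claimed identity. There is no real obstacle here: the only step that requires care is verifying $\im(T')=\calA\cap\im(T)$, in particular the non-obvious inclusion that every $a\in\calA\cap\im(T)$ actually has a preimage inside $T^{-1}(\calA)$, which follows directly from the definition of the preimage.
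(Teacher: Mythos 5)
Your proof is correct. You restrict $T$ to the subspace $T^{-1}(\calA)$, identify the kernel of the restriction as $\ker(T)$ (using $\bzero\in\calA$) and its image as $\calA\cap\im(T)$ (the surjectivity onto $\calA\cap\im(T)$ being exactly the point you flag), and then apply rank--nullity. Every step checks out. The paper takes a slightly different route: it invokes the first isomorphism theorem to get a bijection $T'$ between $\Fq^m/\ker(T)$ and $\im(T)$, pulls $\calA\cap\im(T)$ back to a unique subspace $[\calV]$ of the quotient, and thereby exhibits the preimage explicitly as $T^{-1}(\calA)=\calV+\ker(T)$, from which the dimension count follows. The two arguments are of course two faces of the same fact, but they buy slightly different things. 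Your version is more self-contained and arguably cleaner for the dimension formula alone, since it never leaves $\Fq^m$ and needs no quotient construction. The paper's version has the advantage that the explicit description $T^{-1}(\calA)=\calV+\ker(T)$ is itself reused immediately afterwards: the subsequent Lemma \ref{lm:TinvTV} ($T^{-1}(T(\calV))=\calV+\ker(T)$) and the decoding discussion (where $T_{\bb}^{-1}(\calS)$ is written as a sum of a subspace and $\ker(T_\bb)$, and these preimages are then intersected) both lean on that structural form rather than merely on the dimension. So if you wanted your argument to fully replace the paper's, you would want to record, as a byproduct of your surjectivity step, that $T^{-1}(\calA)$ equals the span of chosen preimages of a basis of $\calA\cap\im(T)$ plus $\ker(T)$ --- which your construction already gives you for free.
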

\begin{proof}
    First of all notice that $T^{-1}(\calA)) = T^{-1}(\calA \cap \im(T))$.
    From the $1$-st theorem of isomorphism there exist a linear bijection $T'$ between $\Fq^m / \ker(T)$ and $\im(T)$, where $\Fq^m / \ker(T)$ is the quotient space obtained by the equivalence relation $\bv \sim \bv + \bz$ for $\bz \in \ker(T).$
    \[    
        \begin{tikzpicture}[node distance = 2cm, auto]
        \node (A) {$\Fq^m$};
        \node(B) [right of=A] {$\im(T)$};
        \node (C) [below of=A] {$\Fq^m / \ker(T)$};
        \draw[->](A) to node {$T$}(B);
        \draw[->](A) to node [left] {$\pi$}(C);
        \draw[->](C) to node [below] {$T'$}(B);
    \end{tikzpicture}
    \]
    Let $[\calV]$ be the unique subspace of $\Fq^m / \ker(T)$ of dimension $\dim(\calA \cap \im(T))$ such that $T'([\calV]) = \calA \cap \im(T)$, the uniqueness is granted by the fact that $T'$ is a bijection.
    The space $[\calV]$ corresponds to the equivalence class $\calV + \ker(T)$, in particular we have that $T(\calV + \ker(T)) = T' \circ \pi(\calV + \ker(T)) = T'([\calV]) = \calA \cap \im(T).$
    There are no other elements in $T^{-1}(\calA).$ Let $\bx \notin \calV + \ker(T)$, then $T(\bx ) = T'(\pi(\bx)) = T'([\bx]) \notin \calA \cap \im(T)$.
    \qed
    \end{proof}
\begin{lemma}\label{lm:TinvTV}
Let $\calV = \Span{\Fq}{\bv_1,\ldots, \bv_k} \subseteq \Fq^m$ and let $T$ be a linear map.
We have that $T^{-1}(T(\calV)) = \calV + \ker(T)$.
\end{lemma}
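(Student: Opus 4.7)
The plan is to prove the equality by double inclusion, as this is the most direct route and avoids any dimension-counting.

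For the inclusion $\calV + \ker(T) \subseteq T^{-1}(T(\calV))$, I would take an arbitrary $\bx = \bv + \bw$ with $\bv \in \calV$ and $\bw \in \ker(T)$, and observe that by $\Fq$-linearity of $T$ we have $T(\bx) = T(\bv) + T(\bw) = T(\bv) \in T(\calV)$. Hence $\bx$ lies in $T^{-1}(T(\calV))$. This direction is essentially the definition of the kernel and requires no further remarks.

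For the reverse inclusion $T^{-1}(T(\calV)) \subseteq \calV + \ker(T)$, I would take any $\bx \in T^{-1}(T(\calV))$. By definition this means $T(\bx) \in T(\calV)$, so there exists some $\bv \in \calV$ such that $T(\bx) = T(\bv)$. By linearity $T(\bx - \bv) = \bzero$, so $\bx - \bv \in \ker(T)$, and therefore $\bx = \bv + (\bx - \bv)$ lies in $\calV + \ker(T)$.

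I do not anticipate any real obstacle: the statement is a standard fact about preimages of linear maps and the two inclusions fall out immediately from linearity plus the definitions of $T(\calV)$, $T^{-1}$ and $\ker(T)$. The spanning set $\{\bv_1,\ldots,\bv_k\}$ of $\calV$ need not be used explicitly; the argument only relies on $\calV$ being an $\Fq$-linear subspace. If one prefers a less hands-on proof, an alternative would be to invoke Lemma \ref{lm:DimLinearInverseSpace} with $\calA = T(\calV)$ (noting $T(\calV) \cap \im(T) = T(\calV)$) together with the rank-nullity-style identity $\dim(T(\calV)) = \dim(\calV) - \dim(\calV \cap \ker(T))$ to conclude that $\dim(T^{-1}(T(\calV))) = \dim(\calV + \ker(T))$, and then combine this with the easy inclusion $\calV + \ker(T) \subseteq T^{-1}(T(\calV))$. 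I would, however, favor the direct two-line double-inclusion argument for clarity.
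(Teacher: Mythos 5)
Your proof is correct, and it takes a genuinely more elementary route than the paper's. The paper proves this lemma by decomposing $\calV = \calV' \oplus (\calV \cap \ker(T))$, passing to the quotient space $\Fq^m/\ker(T)$ via the first isomorphism theorem, and identifying $T^{-1}(T(\calV))$ with the coset $\calV' + \ker(T)$ by reusing the machinery set up in the proof of Lemma \ref{lm:DimLinearInverseSpace}. Your double-inclusion argument bypasses all of that: the inclusion $\calV + \ker(T) \subseteq T^{-1}(T(\calV))$ is immediate from linearity, and the reverse inclusion follows from picking $\bv \in \calV$ with $T(\bx) = T(\bv)$ and noting $\bx - \bv \in \ker(T)$. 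What the paper's approach buys is consistency with the surrounding development — the quotient-space picture is already in place for the dimension count in Lemma \ref{lm:DimLinearInverseSpace} and is reused in Proposition \ref{prop:invTS_dim} — whereas your argument is self-contained, shorter, and arguably the cleaner proof of this particular statement. You are also right that the spanning set $\{\bv_1,\ldots,\bv_k\}$ plays no role; only the subspace structure of $\calV$ is needed.
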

\begin{proof}
Without loss of generality we have that $\calV = \calV' \oplus (\calV \cap \ker(T)),$ where $\oplus$ indicates the direct sum.
From the $1$-st theorem of isomorphism we have that $T(\calV)$ is isomorphic to $[\calV] = [\calV'] \in \Fq^m/\ker(T)$ through the isomorphism $T'$.
Following the proof of Lemma \ref{lm:DimLinearInverseSpace} we have that $T'^{-1}(T(\calV)) = [\calV']$. 
It follows that $T^{-1}(T(\calV)) = \calV' + \ker(T) = \calV + \ker(T).$ \qed
\end{proof}
Assume that $\calS = \Colsp(S) = \Span{\Fq}{\calE \Tdot \calB}$ and
consider the case where $\Rank(\Slice{T}{2}{\bb}) = m - a$ for some $\bb \in \calB$. 
Since $\Slice{T}{2}{\bb}$ is not invertible, we cannot simply compute the space $\calS (\Slice{T}{2}{\bb})^{-1}$. 
A way to overcome this issue is to consider the function $T_\bb(\bx) = \bx \Slice{T}{2}{\bb}$.
From the proof of Lemma \ref{lm:DimLinearInverseSpace} the counter image
$$
    T_\bb^{-1}(\calS) = \{\bx \in \Fq^m \mid \bx \Slice{T}{2}{\bb} \in \calS \} = \calV + \ker(T)
$$
is a space of dimension $\dim(T_\bb^{-1}(\calS)) = \dim(\calS \cap T_\bb(\Fq^m)) + \ker(T_\bb) \leq rd + a.$

Notice that $T_\bb(\calE) \subseteq \calS$ then $T_\bb^{-1}(T_\bb(\calE)) \subseteq T_\bb^{-1}(\calS).$
From Lemma \ref{lm:TinvTV} we have that $T_\bb^{-1}(T_\bb(\calE)) = \calE + \ker(T_\bb)$, in particular $\calE \subseteq \calE + \ker(T_\bb) \subseteq T_\bb^{-1}(\calS).$
Intersecting $T_{\bb_i}^{-1}(\calS)$ for different $\bb_i$, with a good probability, will give us exactly the space $\calE$ or a small subspace containing $\calE$.

Once we recovered the error support we can proceed as in the previous algorithm.

To implement this new algorithm we need to find $T_{\bb_i}^{-1}(\calS)$.
Consider the function $T_{\bb_i}(\bx) = \bx \Slice{T}{2}{\bb_i}$, first we compute the kernel $\ker(T_{\bb_i})$ and the image $\im(T_{\bb_i})$.
Suppose that $\dim(\ker(T_{\bb_i})) = a$ and let $\hat{\calS} = \calS \cap \im(T_{\bb_i})$, form Lemma \ref{lm:DimLinearInverseSpace} we know that $\dim(T_{\bb_i}^{-1}(\calS)) \leq \dim(\hat{\calS}) + a.$
For all the elements of a basis $\bs_1, \ldots, \bs_k$ of $\hat{\calS}$ it is possible to find $\bx_j$ such that $T_{\bb_i}(\bx_j) = \bs_j.$ 
By construction we have that $T$ is an isomorphism between $ \Span{\Fq}{\bx_1,\ldots,\bx_k}$ and $\hat{\calS}$ while $T_{\bb_i}^{-1}(\calS) = \Span{\Fq}{\bx_1,\ldots,\bx_k} + \ker(T_{\bb_i}).$

\ifdraft
\color{blue}
In total we need to solve one linear system to compute $\ker(T_{\bb_i})$ and $k \leq rd$ linear system to find the solution of each  $\bs_j$.
In the previous algorithm we were required to compute the inverse of a matrix and compute $\bs_j \Slice{T}{2}{i}^{-1}$ for the elements of a basis $\bs_i, \ldots, \bs_{rd}$ of $\calS$ with $rd$ elements. (That seems to have roughly the same complexity).
\color{black}
\fi
\begin{algorithm}[http]
		\SetAlgoLined
		\KwIn{
			A generalized LRPC code $\calC = \calH_T^\perp$ with a tensor $T\in \Fqmmm$ and $\calH = \Span{\Fq}{H_1,\ldots , H_{n-k}}$, where $\Colsp(H_i) \subseteq \calB = \Span{\Fq}{\bb_1,\ldots , \bb_d}$, $\forall \, i$.\\ \hspace{1cm}
			A matrix $Y = X + E \in \Fq^{m \times n}$ with $X \in \calC$ and $E \in \Fq^{m \times n}$  of rank $r$.
        }
        \KwOut{The support $\calE = \Colsp(E)$ of dimension $r$.}
        \tcp{Assumption: $\dim(\calE.\calB) = rd$}

        \tcp{Compute syndrome}
        $S = [ ]$\;
        \For{$i \in [n-k]$}
        {
            $S.\mathrm{append}(Y \Tdot H[i])$\;
        }
        \tcp{Compute syndrome space}
        $\calS = \Span{\Fq}{S}$\;
        \uIf{$\dim(\calS) == rd$}
        {
			\tcp{Compute spaces to intersect}
			$Z = \{\}$\;
			\For{$i \in \{1, \ldots, d \}$}
			{   
				$\calA = \im(\Slice{T}{2}{\bb_i}) \cap \calS$\;
                \tcp{Compute the counter-image of $\calA$}
				$Z_i = \{ \bzero \}$\;
				\For{$\ba_j \in \mathbf{Basis(\calA)}$}
				{
					$Z_i = Z_i + \Span{\Fq}{\mathbf{Solve}(\ba_j)}$\;
				}
                $Z_i = Z_i + \ker(\Slice{T}{2}{\bb_d})$\;
				$Z.\mathbf{append}(Z_i)$\;
			}
			\tcp{Compute intersection}
			$\calE = Z[1]$\;
			\For{$i \in \{2 \ldots d\}$}
			{
				$\calE = \calE \cap Z[i]$\; 
			}
			Return $\calE$\;
        }
        \tcp{The dimension of $\calS$ is too low}
        \Else 
        { 
            Error Support Recovery Failure\;
        }
		\caption{Improved error support recovery of generalized LRPC codes}\label{Algorithm 2}
		\normalsize
	\end{algorithm}
\subsubsection{Error probability}
The probability that $\calS = \Span{\Fq}{\calE \Tdot \calB}$ where $\calS = \Span{\Fq}{\bs_1, \ldots \bs_{n-k}}$ can be estimated as $1 - q^{rd - (n-k)}$.
In order to retrieve the correct error support, the second condition that we need to satisfy is that the intersection of the $d$ spaces $T_{\bb_i}^{-1}(\calS)$ is exactly $\calE$.
We know that $\calE \subseteq T_{\bb_i}^{-1}(\calS)$, to be more precise we have that $\calE + \ker(\Slice{T}{2}{\bb_i}) \subseteq T_{\bb_i}^{-1}(\calS)$, therefore
$$
    \calE + \bigcap_{i = 1}^d \ker(\Slice{T}{2}{\bb_i})  \subseteq \bigcap_{i = 1}^d  T_{\bb_i}^{-1}(\calS). 
$$
If we choose the matrices $\Slice{T}{2}{\bb_i}$ from a uniform distribution in an independent way it is very likely that $\cap_{i = 1}^d \ker(\Slice{T}{2}{\bb_i}) = 0$.
The real change with respect to the previous algorithm is that the spaces $T_{\bb_i}^{-1}(\calS)$ are not all of the same dimension $rd$.
The following proposition estimates the distribution of the values of $\dim(T_{\bb_i}^{-1}(\calS))$.
\begin{proposition}\label{prop:invTS_dim}
Suppose $\dim(\im(T_{\bb_i})) = a \geq rd$. Then we have
$$
{\rm P}(\dim(T_{\bb_i}^{-1}(\calS)) = rd + \epsilon) = 
q^{-rd\epsilon}\begin{bmatrix}
            m-a\\ \epsilon
    \end{bmatrix}_q  \alpha
$$ where $\max\{0,m-a-rd\}\leq \epsilon\leq m-a$ and $\alpha =\prod_{i=0}^{t-1} \frac{1 - q^{i-rd}}{1 - q^{i-m}}\prod_{i=t}^{rd -1} \frac{1 - q^{i-(a+t)}}{1 - q^{i-m}}$ tends to $H_q=\lim_{m\rightarrow \infty}\prod_{i=1}^m(1-q^{-i})$.
\end{proposition}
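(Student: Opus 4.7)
The plan is to reduce the probability computation to a classical counting problem in the Grassmannian and then perform a $q$-Pochhammer simplification. The starting point is Lemma \ref{lm:DimLinearInverseSpace}, which gives
$$
\dim(T_{\bb_i}^{-1}(\calS)) = (m-a) + \dim(\calS \cap \im(T_{\bb_i})).
$$
Hence the event $\dim(T_{\bb_i}^{-1}(\calS)) = rd + \epsilon$ is equivalent to $\dim(\calS \cap \im(T_{\bb_i})) = j$ with $j := a + rd - m + \epsilon$, and the stated bounds $\max(0,m-a-rd)\le \epsilon \le m-a$ translate exactly to the feasibility range $\max(0, rd+a-m) \le j \le \min(rd,a)=rd$ for this intersection dimension (the last equality uses $a \ge rd$).

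Next, following the same heuristic already used in Section 5.2 (where the columns of $S$ are treated as uniform in $\Span{\Fq}{\calE\Tdot\calB}$), I would regard $\calS$ as a uniformly random $rd$-dimensional subspace of $\Fq^m$ and treat $V := \im(T_{\bb_i})$ as a fixed subspace of dimension $a$. Using the classical $q$-enumeration
$$
|\{W \in \mathrm{Gr}(rd,m) : \dim(W\cap V) = j\}| = q^{(rd-j)(a-j)} \binom{a}{j}_q \binom{m-a}{rd-j}_q
$$
and dividing by $\binom{m}{rd}_q$, I obtain, after substituting $rd-j = m-a-\epsilon$ and $a-j = m-rd-\epsilon$,
$$
\mathrm{P}(\dim(T_{\bb_i}^{-1}(\calS)) = rd + \epsilon) = \frac{q^{(m-a-\epsilon)(m-rd-\epsilon)} \binom{a}{m-rd-\epsilon}_q \binom{m-a}{\epsilon}_q}{\binom{m}{rd}_q}.
$$
The factor $\binom{m-a}{\epsilon}_q$ already matches the stated form.

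The bulk of the work is then to rewrite $q^{(m-a-\epsilon)(m-rd-\epsilon)}\binom{a}{m-rd-\epsilon}_q / \binom{m}{rd}_q$ as $q^{-rd\epsilon}\alpha$. To do so I would apply the identity $\binom{n}{k}_q = q^{k(n-k)} \prod_{i=0}^{k-1}\frac{1-q^{i-n}}{1-q^{i-k}}$ to both Gaussian binomials. This collapses the explicit powers of $q$ into a single $q^{-rd\epsilon}$ and leaves a ratio of $q$-Pochhammer products. The split in $\alpha$ at the index $t = m-a-\epsilon$ then arises naturally: for $0 \le i < t$ the contribution comes purely from the expansion of $\binom{m}{rd}_q^{-1}$, producing factors $\tfrac{1-q^{i-rd}}{1-q^{i-m}}$, whereas for $t \le i \le rd-1$ a cancellation with the numerator from $\binom{a}{m-rd-\epsilon}_q$ produces factors of the form $\tfrac{1-q^{i-(a+t)}}{1-q^{i-m}}$.

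I expect the main obstacle to be the index bookkeeping in this last step, since the unusual split in $\alpha$ has to emerge from carefully reindexing the two Gaussian binomials and matching the exponents. For the limit claim, I would fix $\epsilon$ and $m-a$ and let $m \to \infty$; in that regime every denominator $1-q^{i-m}$ tends to $1$ and the surviving numerators, after shifting indices, rearrange into the Euler product $\prod_{i\ge 1}(1-q^{-i}) = H_q$.
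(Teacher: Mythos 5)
Your proposal is correct, but it reaches the key probability by a genuinely different route than the paper. After the common first step (both reduce, via Lemma \ref{lm:DimLinearInverseSpace}, to the distribution of $\dim(\calS\cap\im(T_{\bb_i}))$ for a heuristically uniform $rd$-dimensional $\calS$ against the fixed $a$-dimensional $\im(T_{\bb_i})$), the paper instead passes to $\dim(\calS+\im(T_{\bb_i}))$ and invokes its own Proposition \ref{prop:DimA+B}, which is proved in the appendix by a sequential lattice-path argument: basis vectors of $\calS$ are adjoined one at a time, the transition probabilities are tracked, and the Gaussian binomial appears as a generating function of path areas (Lemma \ref{lm:GaussCoefficientPath}). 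You instead quote the classical Grassmannian enumeration $q^{(rd-j)(a-j)}\begin{bmatrix}a\\ j\end{bmatrix}_q\begin{bmatrix}m-a\\ rd-j\end{bmatrix}_q$ for the number of $rd$-subspaces meeting a fixed $a$-subspace in dimension $j$, and divide by $\begin{bmatrix}m\\ rd\end{bmatrix}_q$. The two intermediate expressions are equal (your $j$ equals the paper's $rd-t$), so your starting formula is the paper's \eqref{eq:DimSum1} in disguise; what your approach buys is a shorter, off-the-shelf derivation, while the paper's buys self-containedness and reusability of Proposition \ref{prop:DimA+B} elsewhere. Your outline of the final simplification is sound: expanding only $\begin{bmatrix}a\\ m-rd-\epsilon\end{bmatrix}_q$ and $\begin{bmatrix}m\\ rd\end{bmatrix}_q$ via $\begin{bmatrix}n\\ k\end{bmatrix}_q=q^{k(n-k)}\prod_{i=0}^{k-1}\frac{1-q^{i-n}}{1-q^{i-k}}$ does collapse the powers of $q$ to exactly $q^{-rd\epsilon}$, and the reindexing $i\mapsto i-t$ cancels the denominator of the first expansion against the tail $\prod_{i=t}^{rd-1}(1-q^{i-rd})$, producing precisely the split of $\alpha$ at $i=t$ (the paper hides this same manipulation in Remark \ref{rm:DimSum}). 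The only soft spot is the closing limit claim: in your regime (fixed $\epsilon$, $m-a$, hence fixed $rd$) the surviving numerator is the finite product $\prod_{j=rd-t+1}^{rd}(1-q^{-j})\prod_{j=a+t-rd+1}^{a}(1-q^{-j})$, which approaches $H_q$ only when $t$ is close to $rd$ and $rd$ is large; but the paper's own ``tends to $H_q$'' is equally informal, so this does not affect the substance of the argument.
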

\begin{proof}

From Lemma \ref{lm:TinvTV} we have that there exists a subspace $\calA$ such that $T_{\bb_i}^{-1}(\calS) = \calA + \ker(\Slice{T}{2}{\bb_i})$ where the sum is a direct sum. 
In particular we have $\calA \cong \calS \cap \im(T_{\bb_i})$, then 
\begin{equation}\label{eq:dimTinvS1}
    \dim(T_{\bb_i}^{-1}(\calS)) = \dim(\calS \cap \im(T_{\bb_i})) + \dim(\ker(T_{\bb_i})).
\end{equation} 
Note that 
$
    \dim(\calS \cap \im(T_{\bb_i}))  = \dim(\calS) + \dim(\im(T_{\bb_i})) - \dim(\calS + \im(T_{\bb_i})).
$
Substituting this equation in (\ref{eq:dimTinvS1}) we get 
\begin{equation*}\label{eq:dimTinvS2}
\begin{split}
     \dim(T_{\bb_i}^{-1}(\calS)) 
    &=
    \dim(\calS) + \dim(\im(T_{\bb_i})) + \dim(\ker(T_{\bb_i})) - \dim(\calS + \im(T_{\bb_i}))
    \\&= rd + m- \dim(\calS + \im(T_{\bb_i})).
\end{split}
\end{equation*}
For subspaces $\calS$ and $\im(T_{\bb_i})$, we know the  $a=\dim(\im(T_{\bb_i}))\leq \dim(\calS + \im(T_{\bb_i})) \leq m$. Further, 
the probability that $\dim(\calS + \im(T_{\bb_i})) = a+t$, where $0\leq t\leq m-a$, can be given as follows (see Proposition \ref{prop:DimA+B} in Appendix for the proof):
\begin{align}\label{eq:DimSum1}
    \frac{\displaystyle \prod_{i=0}^{t-1}(q^m - q^{a+i}) \prod_{i=0}^{rd-t-1} (q^{a} - q^i)}{\displaystyle \prod_{i= 0}^{rd-1}(q^m - q^i)} 
    \begin{bmatrix}
        rd \\ t
    \end{bmatrix}_q.
\end{align}
Denote $\epsilon = m-a-t$. Then the probability that $\dim(T_{\bb_i}^{-1}(\calS)) = rd + m-a-t = rd+ \epsilon$ is given by \eqref{eq:DimSum1}, which 
can be reformulated as
\begin{align}\label{eq:DimSum2}
    q^{-rd(m-a-t)} 
    \begin{bmatrix}
            m-a \\ t
    \end{bmatrix}_q
    \alpha \beta
\end{align}
where $\alpha = \prod_{i=0}^{t-1} \frac{q^m - q^{i+m-rd}}{q^m - q^{i}} = \prod_{i=0}^{t-1} \frac{1 - q^{i-rd}}{1 - q^{i-m}}$ and $\beta =  \prod_{i=t}^{rd -1} \frac{q^m - q^{i+ (m-a) - t}}{q^m - q^i}=\prod_{i=t}^{rd -1} \frac{1 - q^{i-(a+t)}}{1 - q^{i-m}}.$

\qed
\end{proof}
Notice that for a random 3-tensor $T\in \Fqmmm$, with high probability the dimension $a$ of $\im(T_{\bb_i})$ is close to $m$, indicating $a+rd \geq m$ for $d\geq 2$ in most cases. 
By Proposition \ref{prop:invTS_dim}, we have 
\begin{equation*}
    \begin{split}
    {\rm P}(\dim(T_{\bb_i}^{-1}(\calS)) = rd )   &\approx 1- q^{-(a+rd-m)}/(q-1)
    \\ 
    {\rm P}(\dim(T_{\bb_i}^{-1}(\calS)) = rd + 1 ) &\approx (1- q^{-(a+rd+2-m)})q^{-(a+rd-m)}/(q-1)\\
\end{split}
\end{equation*}
\color{black}
which are the dominating cases among all possible values of the dimension of $T_{\bb_i}^{-1}(\calS)$.
Hence we may assume the dimension of $T_{\bb_i}^{-1}(\calS)$ to be $rd$ or $rd+1$ even though $\ker(T_{\bb_i})$ has nonzero dimension $m-a$.

Following this consideration, for a random $3$-tensor $T$ it is almost a worst-case scenario to think that all the preimages $T_{\bb_i}^{-1}(\calS)$ have dimension $rd+1$. 
Some matrices $T_{\bb_i}$ will be invertible, implying that $T_{\bb_i}^{-1}(\calS)$ have dimension $rd$. When $T_{\bb_i}$ is not invertible, as we discussed above, the  cases that $T_{\bb_i}^{-1}(\calS)$ have dimension $rd$ or $rd+1$ will be dominating, especially when $rd - \dim(\ker(T_{\bb_i})) = rd-(m-a) \gg 0$ or for large values of $q$.

It is shown in \cite{ROLLO} that the probability that the dimension of the intersection of $d$ independent subspace all containing the same subspace of dimension $r$ is strictly bigger than $r$ is estimated to be $q^{-(d-1)(m-rd-r)}$. 
Using the same estimation for $d$ subspaces of dimension $rd + t$ instead of $rd$, we can derive the estimation $q^{-(d-1)(m-rd -r) + dt}$, where $t=0,1$. 
In order to get roughly the same probability as we had in Algorithm \ref{Algorithm 1}, in Algorithm \ref{Algorithm 2} we need to increase the value of $m$ to $m+1$ or $m+2$ (for $d=2$) with the gain that we can remove the restriction that the subspace $\calB$ and the 3-tensor $T$ are compatible.

\ifdraft
\color{red}
Magma simulation confirms that the typical value is $rd$ with some cases $rd+1$ therefore $t = a,a-1$.
Depending on the size of $\calS$ and $\ker(T)$ it can be more scattered.
Typically for $rd$ large we observe either $\dim(T_\bb^{-1}(\calS)) = rd$ or $rd+1$ even for large kernels (in the order of $m/2$).

Let $\dim(\calS) = rd$ and $\dim(\ker(T)) = a$, we observe that for $rd > a$ the dimension will be $rd$ or $rd+1$ if $a <= rd - 4$ while there could be higher dimension for larger values of $a$.
For $a >= rd + 4$ we observe that the dimension tend to be $a$ or $a+1$.
The maximal variance is when $rd =a$. 
For larger $q$ the value we obtain is usually $\max(a,rd)$ even when $rd = a.$
\color{blue}
Can we somehow prove why we do observe this in magma?

General question:Are there possible advantages when the kernel is very large for any space $\calH$? Might that give smaller space $\calS = \calE.\calH$ when the kernel is large?
\color{black}
\fi




\section{Conclusion}

The contributions of this work are twofold. Firstly, we propose a bilinear product over $\Fq^m$ based on $3$-tensors in $\Fqmmm$ and use it to expand $\Fq$-linear matrix rank metric codes from dimension $k$ to $km$. Such an expansion is a generalization of the traditional generation of an $\Fqm$-linear code, which is an $\Fqm$-linear span of the rows of a full-rank generator matrix over $\Fqm$.
Secondly, we utilize the idea of expansion to derive $\Fq$-linear generalized LRPC codes, which allow for probabilistic polynomial-time decoding. With a close connection to the MiniRank problem, the proposed $\Fq$-linear LRPC codes are of significant interest in cryptographic applications, which require
careful and thorough investigations and will be the focus of our future work.

\ifdraft
\input{CryptographicApplication.tex}
\fi

\bibliographystyle{abbrv}
\bibliography{bibliography.bib}
\appendix
\section{Appendix}
We will show how to compute the probability of the dimension of the sum of two subspaces.
If we consider two spaces $\calA,\calB \subseteq \Fq^m$ then
$$
    \dim(\calA + \calB) = \dim(\calA) + \dim(\calB) - \dim(\calA \cap \calB).  
$$
Suppose $\dim(\calA) = a, \dim(\calB) = b$ and $\dim(\calA \cap \calB) = t'$, if we define $t = b-t'$ then we have that 
$$
    P(\dim(\calA + \calB) = a + t) = P(\dim(\calA \cap \calB) = t' = b - t).
$$
There is already a formula \cite{Adrien2014} that computes the second probability.
We will show an alternative proof that will lead to an equivalent formula.
Before proceeding we need a preliminary result concerning a different interpretation of the Gaussian coefficient.
For the proof of the following lemma we refer to \cite{POLYA1969}.
\begin{lemma}\label{lm:GaussCoefficientPath}
    The Gaussian coefficient is defined as follows
    $$
        \begin{bmatrix} 
            m \\ r
        \end{bmatrix}_q = \displaystyle \prod_{i = 0}^{r-1} \frac{q^m - q^i}{q^r - q^i} = \sum_{\alpha = 0}^{r(m-r)} N_{m,r,\alpha} q^\alpha.
    $$
    Consider a directed graph made by an $m \times r$ grid of $m$ vertical lines and $r$ horizontal lines, see Fig.\ref{fig:PathArea}. Consider the intersection between these lines as the vertex and the lines connecting them as edges in direction left to right or bottom to top.
    We can label the vertex in the intersection between the $i$-th vertical line and the $j$-th horizontal line with $(i,j)$, this vertex is connected with $(i+1,j)$ and $(i,j+1)$. 
    The number of possible path from the vertex $(1,1)$ to the vertex $(m,r)$ such that the area below the path (the number of squares below the path) is $\alpha$ correspond to $N_{m,r,\alpha}$.
    \begin{figure}
    \center
	\begin{tikzpicture}[scale = 0.7]
		
		
	
		\foreach \i in {1,...,3} {
			\foreach \j in {1,...,2} {
				\draw [thin,gray!70] (\i,\j) -- (\i+1, \j);
                \draw [thin,gray!70] (\i,\j) -- (\i, \j + 1);
			}
		}
        \draw [thin,gray!70] (4,1) -- (4, 3);
        \draw [thin,gray!70] (1,3) -- (4, 3);
		
        \draw [thick, red!70] (1,1) -- (1, 2) -- (2,2) -- (3,2) -- (3,3) -- (4,3);
        
        \fill[red!50, opacity = 0.5] (1,1) -- (1, 2) -- (2,2) -- (3,2) -- (3,3) -- (4,3) -- (4,2) -- (4,1) -- (3,1) -- (2,1) -- (1,1);
        
		\foreach \i in {1,...,4} {
			\foreach \j in {1,...,3} {
				\fill[black!70, opacity = 1] (\i,\j) circle (2pt);
			}
		}
        \draw[black] (1, 1) node [anchor = east ] {$(1,1)$};
        \draw[black] (4, 3) node [anchor = west ] {$(4,3)$};
	\end{tikzpicture}
	\caption{Path Area}\label{fig:PathArea}
\end{figure}
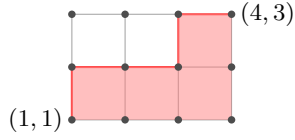    
\end{lemma}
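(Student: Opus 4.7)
The strategy I would pursue is to show that both the product formula and the path-area generating function satisfy the same Pascal-type recursion with matching initial conditions, so that an induction on $m + r$ identifies them term by term. This is the classical route for this kind of $q$-identity.

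First, I would derive the $q$-Pascal recursion
$$
\begin{bmatrix} m \\ r \end{bmatrix}_q
= \begin{bmatrix} m-1 \\ r-1 \end{bmatrix}_q
+ q^{r} \begin{bmatrix} m-1 \\ r \end{bmatrix}_q
$$
directly from the product definition $\prod_{i=0}^{r-1}(q^m - q^i)/(q^r - q^i)$ by splitting off the factor $(q^m - q^{r-1})$ and using $q^m - q^{r-1} = (q^m - q^{m-1}) + q^{m-1}(1 - q^{r-1-m+1})$ to separate the two terms. Next, I would let $P_{m,r}(q) = \sum_{\alpha} N_{m,r,\alpha}\,q^{\alpha}$ be the area-generating function of monotone lattice paths on the grid and classify each path by its last edge: if the last edge is vertical (entering the terminal vertex from below), the area below the path is unchanged and the remainder is a path ending one row lower, giving a term $P_{m,r-1}(q)$; if the last edge is horizontal, that final edge sits at the top and contributes a full column of squares below it, which accounts for a factor $q^{r}$ (up to the chosen indexing), while the remainder is a shorter path, giving $q^{r} P_{m-1,r}(q)$. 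This matches the $q$-Pascal recursion exactly.

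For the base cases I would check that when one of the dimensions is trivial the unique path has area $0$, so $P$ equals $1$, which matches $\begin{bmatrix} m \\ 0 \end{bmatrix}_q = \begin{bmatrix} m \\ m \end{bmatrix}_q = 1$. Induction on $m + r$ then forces $P_{m,r}(q) = \begin{bmatrix} m \\ r \end{bmatrix}_q$ and reading off the coefficient of $q^{\alpha}$ gives the claim $N_{m,r,\alpha} = [q^{\alpha}]\begin{bmatrix} m \\ r \end{bmatrix}_q$, with the degree bound $\alpha \le r(m-r)$ coming automatically from the degree of the polynomial.

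The main obstacle I expect is matching the grid convention in the lemma to the classical convention under which the identity is usually stated: the product formula has degree $r(m-r)$, so the bijection with lattice paths must be set up so that the maximal enclosed area is exactly $r(m-r)$ (i.e.\ a path in an $r \times (m-r)$ rectangle rather than the naive $(m-1) \times (r-1)$ reading of the figure). Once the reparametrisation is fixed, the weight $q^{r}$ picked up by the last horizontal step in the recursion drops out cleanly and the induction is routine; the remainder of the proof is just bookkeeping. For the details I would follow the combinatorial argument in P\'olya's paper \cite{POLYA1969}, which is the reference already cited by the authors.
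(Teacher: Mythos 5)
The paper does not actually prove this lemma: it states ``For the proof of the following lemma we refer to \cite{POLYA1969}'' and moves on. Your $q$-Pascal induction is the standard self-contained argument for exactly this identity, and it is essentially correct: the recursion $\begin{bmatrix} m \\ r \end{bmatrix}_q = \begin{bmatrix} m-1 \\ r-1 \end{bmatrix}_q + q^{r}\begin{bmatrix} m-1 \\ r \end{bmatrix}_q$ follows from the product formula (most cleanly via $[m]_q = [r]_q + q^{r}[m-r]_q$, which is your splitting in disguise), the classification of paths by their last edge reproduces the same recursion for the area generating function, and the base cases are trivial. So what you supply is a genuine proof where the paper only supplies a pointer; the two ``approaches'' do not conflict, since P\'olya's argument is the same combinatorial one you sketch.

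Two points deserve emphasis. First, the reparametrisation you flag as the ``main obstacle'' is not mere bookkeeping: the lemma as literally stated is false. A path from $(1,1)$ to $(m,r)$ in a grid of $m$ vertical and $r$ horizontal lines has $(m-1)$ horizontal and $(r-1)$ vertical steps and maximal enclosed area $(m-1)(r-1)$, which does not match the degree $r(m-r)$ of $\begin{bmatrix} m \\ r \end{bmatrix}_q$; already for $m=2$, $r=1$ there is a single path of area $0$ while $\begin{bmatrix} 2 \\ 1 \end{bmatrix}_q = 1+q$. The correct statement uses a path with $m-r$ horizontal and $r$ vertical steps (an $(m-r)\times r$ rectangle), which is also what the application in Proposition~4 actually needs, since there the paths have $b$ steps of which $t$ are vertical and the relevant coefficient is $\begin{bmatrix} b \\ t \end{bmatrix}_q$. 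Your proof proves the correct statement, not the literal one. Second, a small indexing slip: in the vertical-last-edge case you write that the remainder contributes $P_{m,r-1}(q)$, but to match the recursion term $\begin{bmatrix} m-1 \\ r-1 \end{bmatrix}_q$ it must be $P_{m-1,r-1}(q)$ --- removing a vertical step lowers the height by one while keeping the width $m-r = (m-1)-(r-1)$ fixed. With the rectangle convention fixed this resolves itself, but as written the two indexings are mixed.
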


\begin{proposition}\label{prop:DimA+B}
Let $\calA,\calB \subseteq \Fq^m$ be two subspaces of dimension $\dim(\calA) = a$ and $\dim(\calB) = b$. 
Consider $a \ge b$, then the probability $P(\dim(\calA + \calB) = a + t)$ can be expressed by the following
\begin{align}\label{eq:DimSum}
    \frac{\displaystyle \prod_{i=0}^{t-1}(q^m - q^{a+i}) \prod_{i=0}^{b-t-1} (q^{a} - q^i)}{\displaystyle \prod_{i= 0}^{b-1}(q^m - q^i)} 
    \begin{bmatrix}
        b \\ t
    \end{bmatrix}_q.
\end{align}
\end{proposition}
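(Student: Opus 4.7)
The plan is to prove the proposition by a direct counting argument over $b$-dimensional subspaces of $\Fq^m$, with $\calA$ fixed of dimension $a$. Under the natural uniform distribution on $\calB$, the probability is
\[
\mathrm{P}(\dim(\calA+\calB)=a+t) \;=\; \frac{N(a,b,t)}{\binom{m}{b}_q},
\]
where $N(a,b,t)$ is the number of $b$-dimensional $\calB\subseteq\Fq^m$ with $\dim(\calA\cap\calB)=b-t$. Every such $\calB$ is uniquely determined by its intersection $\calC=\calA\cap\calB\subseteq\calA$ (of dimension $b-t$) together with an extension of $\calC$ to $\calB$ satisfying $\calB\cap\calA=\calC$.

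First I would count the intersections: the number of $(b-t)$-dimensional subspaces $\calC$ of $\calA$ is $\binom{a}{b-t}_q$. Then, with $\calC$ fixed, I would pass to the quotient $\Fq^m/\calC$, where $\calA/\calC$ has dimension $a-b+t$; a valid $\calB$ corresponds to a $t$-dimensional subspace of $\Fq^m/\calC$ intersecting $\calA/\calC$ trivially. Building such a subspace by picking an ordered basis $\bar\bv_1,\dots,\bar\bv_t$, at the $i$-th step one must avoid the $(a-b+t+i-1)$-dimensional subspace $(\calA/\calC)+\langle\bar\bv_1,\dots,\bar\bv_{i-1}\rangle$, giving $\prod_{i=0}^{t-1}\bigl(q^{m-b+t}-q^{a-b+t+i}\bigr)$ ordered choices, and dividing by $\prod_{i=0}^{t-1}(q^t-q^i)$ (the number of ordered bases of a fixed $t$-space) yields the number of valid $\calB$. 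Multiplying gives
\[
N(a,b,t)=\binom{a}{b-t}_q\cdot\frac{\prod_{i=0}^{t-1}(q^{m-b+t}-q^{a-b+t+i})}{\prod_{i=0}^{t-1}(q^t-q^i)}.
\]

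The remaining work is algebraic simplification to match the stated formula. I would use the standard identity
\[
\prod_{i=0}^{b-1}(q^b-q^i) \;=\; \binom{b}{t}_q\,q^{t(b-t)}\prod_{i=0}^{t-1}(q^t-q^i)\prod_{i=0}^{b-t-1}(q^{b-t}-q^i),
\]
(which expresses the Bruhat-style decomposition $|GL_b(\Fq)|=\binom{b}{t}_q\,q^{t(b-t)}|GL_t||GL_{b-t}|$), together with the factorizations $\prod_{i=0}^{t-1}(q^{m-b+t}-q^{a-b+t+i})=q^{-t(b-t)}\prod_{i=0}^{t-1}(q^m-q^{a+i})$ and $\binom{a}{b-t}_q=\prod_{i=0}^{b-t-1}(q^a-q^i)/\prod_{i=0}^{b-t-1}(q^{b-t}-q^i)$. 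Substituting these into $N(a,b,t)/\binom{m}{b}_q$, the $q^{-t(b-t)}$ from the first identity cancels the $q^{t(b-t)}$ from the second, the factors $\prod(q^t-q^i)$ and $\prod(q^{b-t}-q^i)$ cancel between numerator and denominator, and the Gaussian binomial $\binom{b}{t}_q$ is released, leaving exactly \eqref{eq:DimSum}.

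The conceptual steps (the bijection $\calB\leftrightarrow(\calC,\calB/\calC)$ and the quotient argument) are the heart of the argument; the main obstacle is purely bookkeeping, namely keeping track of all exponent sums of $q$ in the simplification. An alternative route would be to appeal to the known formula for $q$-analogues of intersection distributions (e.g. \cite{Adrien2014}) and verify that it rearranges into the stated form, but the direct count above is self-contained and requires no external input beyond the standard order formulas for $GL$ and the Gaussian binomial.
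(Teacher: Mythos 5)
Your proof is correct, and it takes a genuinely different route from the paper's. The paper argues \emph{dynamically}: it adjoins the basis vectors of $\calB$ to $\calA$ one at a time, computes the transition probabilities for the dimension to grow or to stall at each step, and sums the probabilities of all lattice paths from $(\calA_0,a)$ to $(\calA_b,a+t)$, invoking the area interpretation of the Gaussian coefficient (Lemma \ref{lm:GaussCoefficientPath}) to collect the path-dependent powers $q^{f_p}$ into $\Gaussbin{b}{t}_q$. You argue \emph{statically}: you count the $b$-dimensional subspaces $\calB$ with $\dim(\calA\cap\calB)=b-t$ via the bijection with pairs consisting of the intersection $\calC\subseteq\calA$ and a $t$-dimensional complement of $\calA/\calC$ in $\Fq^m/\calC$, and normalize by $\binom{m}{b}_q$; the Gaussian coefficient is released by the parabolic factorization of $|GL_b(\Fq)|$. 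The two probability models agree (choosing $b$ linearly independent vectors uniformly one at a time induces the uniform distribution on $b$-dimensional subspaces), and the bookkeeping you defer does close as claimed: your count simplifies to $N(a,b,t)=q^{(a-b+t)t}\binom{a}{b-t}_q\binom{m-a}{t}_q$, and dividing by $\binom{m}{b}_q$ reproduces \eqref{eq:DimSum} exactly. What each approach buys: yours is self-contained modulo the order formula for $GL$, needs no lattice-path lemma, and makes transparent that the result is the classical intersection-distribution formula (the route via \cite{Adrien2014} you mention); the paper's sequential argument mirrors how the syndrome vectors actually accumulate in the decoder and yields the combinatorial by-product that the $q$-binomial counts paths by area.
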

\begin{proof}
Let $(\balpha_1, \ldots \balpha_a)$ be a base of $\calA$ and $(\bbeta_1, \ldots \bbeta_b)$ be a base of $\calB$ and consider the matrices $A = (\balpha_1^\intercal, \ldots \balpha_a^\intercal)$ and $B = (\bbeta_1^\intercal, \ldots \bbeta_b^\intercal)$.
We have that $\dim(\calA + \calB) = \mathrm{w_R}(A \mid B)$.
A way to estimate $\mathrm{w_R}(A \mid B)$ is the following.
Consider the matrices the sequence of matrices $A_0,\ldots, A_b$ where $A_0 = A$ and $A_{i}$ is defined recursively as $A_{i} = (A_{i-1} \mid \bbeta_{i}^\intercal).$
We have that $A_b = \mathrm{w_R}(A \mid B)$ and $\mathrm{w_R}(A_i) \le \mathrm{w_R}(A_{i+1}) \le \mathrm{w_R}(A_i) + 1$ for all $i$ in the sequence.
To simplify the notation we will denote $\Colsp(A_i) = \calA_i$.
Every time we add a new column from $B$ to $A$ the dimension will either grow by $1$ or stay the same as it was in the previous step.
This situation is represented in Fig. \ref{fig:RankA} where the node of coordinates $(i,j)$ correspond to the pair $(\calA_i,j)$ and the line between the nodes represent the possible transitions. 
In particular we represented with a red line the path that goes from $(\calA_0,a)$ to $(\calA_4, a+2)$ passing through $(\calA_1,a+1), (\calA_2, a+1)$ and $(\calA_3, a+2)$.
\begin{figure}
    \center
	\begin{tikzpicture}[scale = 0.7]
		\foreach \i in {1,...,4} {
			\foreach \j in {1,...,\i} {
				\draw [thin,gray!70] (\i,\j) -- (\i+1, \j);
                \draw [thin,gray!70] (\i,\j) -- (\i+1, \j + 1);
			}
		}

        \draw [thick, red!70] (1,1) -- (2, 2) -- (3,2) -- (4,3) -- (5,3);
        
        \fill[red!50, opacity = 0.5] (1,1) -- (2, 2) -- (3,2) -- (4,3) -- (5,3) -- (4,2) -- (3,1) -- (2,1) -- (1,1);
        
		\foreach \i in {1,...,5} {
			\foreach \j in {1,...,\i} {
				\fill[black!70, opacity = 1] (\i,\j) circle (2pt);
			}
		}
        \draw[black] (1, 0) node [anchor = center ] {$\calA_0$};
        \draw[black] (2, 0) node [anchor = center ] {$\calA_1$};
        \draw[black] (3, 0) node [anchor = center ] {$\calA_2$};
        \draw[black] (4, 0) node [anchor = center ] {$\calA_3$};
        \draw[black] (5, 0) node [anchor = center ] {$\calA_4$};
        \draw[black] (5, 3) node [anchor = west ] {$(\calA_4,a + 2)$};
        \draw[black] (0, 1) node [anchor = center ] {$a$};
        \draw[black] (0, 2) node [anchor = center ] {$a+1$};
        \draw[black] (0, 3) node [anchor = center ] {$a+2$};
        \draw[black] (0, 4) node [anchor = center ] {$a+3$}; 
        \draw[black] (0, 5) node [anchor = center ] {$a+4$};
        \draw[red] (2.75, 2) node [anchor = south ] {$q$}; 
        \draw[red] (4.75, 3) node [anchor = south ] {$q^2$}; 
        
	\end{tikzpicture}
	\caption{Dimension of $\calA_i$}\label{fig:RankA}
\end{figure}

We call $P_{i+1}^{(a+v)}$ the probability $P(\dim(\calA_{i+1}) = \dim(\calA_i) \mid \dim(\calA_i) = a + v)$, the probability that $\dim(\calA_{i+1}) = \dim(\calA_i) + 1$ knowing that $\dim(\calA_i)= a + v$ is given by $1 - P_i^{(a+v)}$.
The probability that the dimension will not increase is given by the probability that $\bbeta_{i+1}^\intercal \in \calA_i$. 
Since we have that $\bbeta_1,\ldots, \bbeta_b$ are linearly independent and that $\Span{\Fq}{\bbeta_1, \ldots, \bbeta_i} \subseteq \calA_i$ then there are $q^{a+t} - q^i$ possible $\bbeta_{i+1}$ such that $\bbeta_{i+1} \in \calA_i$ and $\bbeta_{i+1} \notin \Span{\Fq}{\bbeta_1, \ldots, \bbeta_i}$. While the number of $\bbeta_{i+1} \in Fq^m$ such that $\bbeta_{i+1} \notin \Span{\Fq}{\bbeta_1, \ldots, \bbeta_i}$  is $q^m - q^i.$ 
This gives the transition probabilities
\begin{align}\label{eq:TranProb}
    P_{i+1}^{(a+v)} = \frac{q^{a+v} - q^i}{q^m - q^i}, \quad 1 -  P_{i+1}^{(a+v)} = \frac{q^m - q^{a+v}}{q^m - q^i}.
\end{align}
Notice that $P_{a+1}^{(a)} = 0$, which reflect the fact that, after we have already added $a$ linearly independent vectors to a space of dimension $a,$ if the dimension is still $a$, adding another linearly independent vector will necessarily make the total dimension growing by $1$ . We also have that $1 - P_{i}^{(m)} = 0$ which reflect the fact that once $\calA_{i-1}$ is of full dimension $m$ the dimension of $\calA_{i}$ cannot grow further.

In Fig. \ref{fig:RankA} the nodes represent the possible dimension of each $\calA_i$, for example we know that $\dim(\calA_0)$ must be $a$ while $\dim(\calA_i)$ could be any value between $a$ and $\min(a+i,m)$.  
To calculate the probability that, adding $b$ linearly independent vectors to a space of dimension $a$ we get a space of dimension $a+t$, we can sum all the possible paths from the node $(\calA_0, a)$  to $(\calA_b, a+t)$ weighted by their probability.

We can rewrite the transition probability in (\ref{eq:TranProb}) as
\begin{align}\label{eq:TranProb2}
     P_{i+1}^{(a+v)} = \frac{x_{a+v,i}}{z_i}, \quad 1 -  P_{i+1}^{(a+v)} = \frac{y_{a+v}}{z_i}
\end{align}
where $x_{a+v,i} = q^{a+v} - q^i, y_{a+v} = q^m - q^{a+v}$ and $z_i = q^m - q^i$.
The probability of a particular path will be given by the product of each transition.
All the paths will have the same length $b$, regardless of the path we take there will always be $\prod_{i=0}^{b-1}z_i$ as a denominator.
Each path will contain $t$ steps where the dimension increase and $b-t$ steps where the dimension remains the same.
Since the probabilities $y_{a+v}$ do not depend on $i$ each path probability will have a factor $\prod_{v = 0}^{t-1} y_{a+v}$.
Finally notice that $x_{a + v,i} = q x_{a+v+1,i+1}$ so each path will have the same factor $\prod_{i = 0}^{b-t-1} x_{a,i}$ and some other factor $q^{f_p}$ where $f_p$ is an integer that depends on the path.
To summarize any path $p$ going from $(\calA_0, a)$  to $(\calA_b, a+t)$ will have probability expressed as 
\begin{align}\label{eq:PathWeight}
    P(p) = \frac{\displaystyle \prod_{v = 0}^{t-1} y_{a+v} \prod_{i = 0}^{b-t-1} x_{a,i} }{\prod_{i=0}^{b-1}z_i}q^{f_p} = Q(a,b,t) q^{f_p}.
\end{align}
The exponent $f_p$ is the only parameter that depend from the path, since $x_{a + v,i} = q x_{a+v+1,i+1}$ we have that every time we increase the dimension all the following horizontal step will increase $f_p$ by one.
Basically $f_p$ correspond to the area below the path $p$.
For example in Fig. \ref{fig:RankA} we can see that $f_p = 3$ and the weight of the path is indeed $Q(a,b,t) q^3$.
We can then combine (\ref{eq:PathWeight}) and Lemma \ref{lm:GaussCoefficientPath} to conclude the proof noticing that
\begin{align}\label{eq:DimA+B}
     P(\dim(\calA + \calB) = a + t) = Q(a,b,t) \sum_{p \in P_{a,b,t}} q^{f_p} =  Q(a,b,t) \begin{bmatrix}
         b \\ t
     \end{bmatrix}_q,
\end{align}
where $P_{a,b,t}$ denotes the set of all the paths of length $b$ starting at dimension $a$ and ending at dimension $a+t$. \qed
\end{proof}

\begin{remark}\label{rm:DimSum}
After some algebraic manipulation, the formula (\ref{eq:DimSum}) in Proposition \ref{prop:DimA+B} can be reformulated as
\begin{align}\label{eq:DimSum2}
    q^{b(t-(m-a))} 
    \begin{bmatrix}
            m-a \\ t
    \end{bmatrix}_q
    \alpha \beta
\end{align}
where $\alpha = \displaystyle \prod_{i=0}^{t-1} \frac{q^m - q^{i+m-b}}{q^m - q^{i}}$ and $\beta = \displaystyle \prod_{i=t}^{b -1} \frac{q^m - q^{i+ (m-a) - t}}{q^m - q^i}.$
Recall that $0 \leq t \leq \min(b, m-a),$ when $t = m-a$ correspond to the case where $\calA + \calB  = \Fq^m,$ if we substitute this value in (\ref{eq:DimSum2}) all the terms but $\alpha$ are going to $1$.
We are then left with the much simpler formula $\displaystyle \prod_{i=0}^{m-a-1} \frac{q^m - q^{i+m-b}}{q^m - q^{i}},$ which can be approximated to $1 - q^{m - a- b -1}.$  

For lower values of $t$ the dominating part is ruled by the first two terms giving an approximation of $q^{(t-b)(m - a - t)}$.

In practice this means that with a big probability $t = m-a$ or $t=m-a-\epsilon$ for some small value of $\epsilon$.
Equivalently that the dimension of $\calA + \calB$ is equal to $m$ or $m-\epsilon$ for some small value of $\epsilon$.
\end{remark}

\end{document}